\newcommand{\Q}[0]{{\mathbb{Q}}}
\newcommand{\R}[0]{{\mathbb{R}}}
\newcommand{\IR}[0]{{\mathbb{IR}}}
\newcommand{\IQ}[0]{{\mathbb{IQ}}}
\newcommand{\ol}[1]{\mbox{$\overline{{#1}}$}} 
\newcommand{\ul}[1]{\mbox{$\underline{{#1}}$}} 
\newcommand{\diag}{\mathop{\rm diag}\nolimits}
\def\lmin{{\lambda_{\min}}}
\def\lmax{{\lambda_{\max}}}
\newcommand{\pro}[1]{\textsf{#1}}
\newcommand{\size}{\textit{size}}
\newcommand{\NP}{\textrm{NP}}
\newcommand{\pP}{\textrm{P}}
\newcommand{\coNP}{\textrm{coNP}}
\def\eps{{\varepsilon}}
\newcommand{\imace}[1]{\mathbf{#1}}
\newcommand{\smace}[1]{\mathbf{#1}{}^S} 
\definecolor{orange}{RGB}{255,127,0}
\begin{document}

\title{Interval Linear Algebra and Computational Complexity}
\author{Jaroslav Hor\'a\v cek, Milan Hlad\'ik and Michal \v Cern\'y}
\institute{Jaroslav Hor\'a\v cek \at Charles University, Faculty of Mathematics and Physics, Department of Applied
Mathematics, Malostransk\'e n\'am. 25, 118 00, Prague, Czech Republic, \email{horacek@kam.mff.cuni.cz}
\and Milan Hlad\'ik \at Charles University, Faculty of Mathematics and Physics, Department of Applied
Mathematics, Malostransk\'e n\'am. 25, 118 00, Prague, Czech Republic \email{hladik@kam.mff.cuni.cz} \and
 Michal \v Cern\'y \at University of Economics, Faculty of Computer Science and Statistics, n\'am. W. Churchilla 
 4, 13067 Prague, Czech Republic \email{cernym@vse.cz}
}
%
%
\maketitle

\abstract{This work connects two mathematical fields -- computational complexity and interval linear algebra. It introduces the basic topics of interval linear algebra -- regularity and singularity, full column rank, solving a linear system, deciding solvability of a linear system, computing inverse matrix, eigenvalues, checking positive (semi)definiteness or stability. We discuss these problems and relations between them from the view of computational complexity. Many problems in interval linear algebra are intractable, hence we emphasize subclasses of these problems that are easily solvable or decidable. The aim of this work is to provide a basic insight into this field and to provide materials for further reading and research.}

\section{Introduction}
The purpose of this work is to emphasize relations between the two mathematical fields - interval linear algebra and computational complexity. This is not a pioneer work. Variety of relations between interval problems and computational complexity is covered by many papers. There are also few
monographs that are devoted to this topic \cite{fiedler:linopt,kreinovich:complexity,rohn:handbook}. 
Some questions may arise in mind while reading the previous works. Among all, it is the question about the equivalence of the notions
NP-hardness and co-NP-hardness. Some authors use these notions as synonyms. Some distinguish between them. Another 
questions that may arise touches the representation and reducibility of interval problems in a given computational model. We would like to shed more light (not only) on these issues.

Many well-known problems of classical linear algebra become intractable when we introduce intervals into matrices and vectors. However, not everything is lost. 
There are many interesting sub-classes of problems that behave well. We would like to point out these feasible cases, since they are interesting either from the theoretical or the computational point of view.   

Our work does not aspire to replace the classical monographs or handbooks. It lacks many of their details that are cited in the text. Nevertheless, it collects even some recent results that are missing in the monographs. It also provides links and reductions between the various areas of interval linear algebra.  
It provides a necessary and compact introduction to computational complexity and interval linear algebra. Then it considers complexity and feasibility of various  
well-known linear algebraic tasks when considered with interval structures -- regularity and singularity, full column rank, solving a linear system, deciding solvability of a linear system, computing inverse matrix, eigenvalues, checking positive (semi)definiteness or stability. 

We hope this paper should help newcomers to this area to improve her/his orientation in the field or professionals to provide a signpost to more deeper literature.

\section{Interval linear algebra -- part I}
Interval linear algebra is a mathematical field developed from classical linear algebra. The only difference is, that we do not work with real numbers but with real closed intervals 
$$ \imace{a} = [\ul{a}, \ol{a}], $$
where $\ul{a} \leq \ol{a}$.
The set of all closed real intervals is denoted $\IR$ (the set of all closed rational intervals is denoted $\IQ$)
We can use intervals for many reasons -- in applications we sometimes do not know some parameters precisely, that is why, we rather use intervals of possible values; some real numbers are problematic (e.g., $\pi, \sqrt{2}, \ldots$) because it is not easy to represent them precisely, that is why, we can represent them with rigorous intervals containing them etc.
With interval we can define arithmetic (there are more possible definitions, we chose one of the most basic ones). 
\begin{definition}
Let us have two intervals $ \imace{x} = [ \ul{x}, \ol{x} ] $ a $ \imace{y} = [ \ul{y}, \ol{y}] $. The arithmetical operations $ +, *, -, / $ are defined as follows\begin{eqnarray}
\imace{x} + \imace{y} & = & [ \ul{x} + \ul{y}, \ol{x} + \ol{y} ],  \nonumber\\ 
\imace{x} - \imace{y} & =& [ \ul{x} - \ol{y}, \ol{x} - \ul{y} ],  \nonumber\\
\imace{x} * \imace{y}  & = & [ \min(S), \max(S) ], \ \textrm{where} \  S = \{  \ul{xy}, \, \ul{x} \ol{y}, \, \ol{x} \ul{y}, \, \ol{xy} \}, \nonumber\\ 
\imace{x} \, / \, \imace{y}  &= & \imace{x} * (1 / \imace{y}), \quad \textrm{where} \  1/\imace{y} = [1/\ol{y}, 1/\ul{y}], \ 
0 \notin \imace{y}.  \nonumber
\end{eqnarray}
\end{definition} 
Hence, we can use intervals instead of real numbers in formulas. However, we have to be careful. If there is a multiple occurrence of the same interval in a formula, 
the interval arithmetic does see them as two different intervals and we get an overestimation in the resulting interval. For example, let us have 
$\imace{x} = [-2,1]$ and functions $f_1(x) = x^2$ and $f_2(x) = x * x$. Then we get
\begin{eqnarray}
& f_1(\imace{x}) & =  f_1([-2,1]) = [-2,1]^2 = [0,4], \nonumber \\
& f_2(\imace{x}) & =  f_2([-2,1]) = [-2,1] * [-2,1] = [-2,4].  \nonumber 
\end{eqnarray}
In the first case we see the optimal result, in the second case we see overestimation. That is why, the form of our mathematical expression matters. However, we know the cases when the resulting interval is optimal \cite{neumaier:interval}.  
\begin{theorem}
\label{dependencythm}
Applying interval arithmetic on expressions in which all variables occur only once gives the optimal resulting interval. 
\end{theorem}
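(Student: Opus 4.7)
The plan is structural induction on the expression tree of the formula. Let me write $f(\X)$ for the real-valued expression with variables $\X = (x_1, \dots, x_n)$ where each $x_i$ appears at most once, and let $\imace{f}(\imace{\X})$ denote its interval evaluation when the inputs are intervals $\imace{x}_1, \dots, \imace{x}_n$. The goal is to show that
\[
\imace{f}(\imace{\X}) \;=\; \{\, f(\X) : x_i \in \imace{x}_i,\ i = 1,\dots, n \,\}.
\]
Call this set the \emph{exact range} $R_f$.

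The base case (a single variable or constant) is trivial. For the inductive step, I would first establish the auxiliary lemma that each primitive operation $\circ \in \{+,-,*,/\}$ from the definition above satisfies the range identity
\[
\imace{u} \circ \imace{v} \;=\; \{\, u \circ v : u \in \imace{u},\ v \in \imace{v}\,\},
\]
which is a routine check (each operation is continuous and monotone in each argument on a box missing zero in the divisor case, so the extreme values are attained at corners and the intermediate values by the intermediate value theorem, giving a connected, hence interval, range that equals the formula from the definition).

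For the induction step itself, suppose $f = f_1 \circ f_2$ where $f_1$ and $f_2$ are sub-expressions. Because each variable occurs at most once in $f$, the sets of variables of $f_1$ and $f_2$ are \emph{disjoint}. By the induction hypothesis, $\imace{f}_j(\imace{\X}^{(j)}) = R_{f_j}$ for $j = 1, 2$, where $\imace{\X}^{(j)}$ lists the inputs of $f_j$. Applying the auxiliary lemma,
\[
\imace{f}(\imace{\X}) \;=\; \imace{f}_1 \circ \imace{f}_2 \;=\; \{\, u \circ v : u \in R_{f_1},\ v \in R_{f_2}\,\}.
\]
Since the variables of $f_1$ and $f_2$ are disjoint, the values of $f_1$ and $f_2$ over the input box can be chosen independently, so this last set coincides with $R_f$.

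The main obstacle, and the crucial place where the single-occurrence hypothesis enters, is exactly this final independence step: without disjointness of the variable sets the pair $(u,v)$ would be constrained to lie on the graph of a coupling between the shared variables, and $\imace{f}_1 \circ \imace{f}_2$ would in general strictly over-approximate $R_f$. Connectedness of $R_{f_j}$ (each is an interval by induction) is also needed so that the Cartesian product $R_{f_1} \times R_{f_2}$ is realized by the inputs, which is the reason I first verify that the primitive interval operations really do produce intervals and not just supersets.
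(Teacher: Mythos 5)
Your argument is correct and is the standard proof of this classical fact: the paper itself offers no proof, only a citation to Neumaier's monograph, and the argument given there is essentially your structural induction (range-exactness of each primitive operation, plus the observation that disjointness of the operands' variable sets lets the pair $(u,v)$ range over the full Cartesian product $R_{f_1}\times R_{f_2}$). The only points worth making explicit in a polished write-up are that the operations of the paper's Definition are all binary, so induction on binary nodes exhausts all admissible expressions, and that for division one must assume the subexpression in the denominator evaluates to an interval not containing zero, so that the interval expression is defined at all.
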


Using intervals we can build larger structures. In the interval linear algebra the main notion is an interval matrix. It is defined as follows:
$$\imace{A} = \{ A \ | \ \ul{A} \leq A \leq \ol{A} \}, $$
where $\ul{A}, \, \ol{A}$ are real $m \times n$ matrices called \emph{lower} and \emph{upper} bound and the relation $\leq$ is always understood componentwise.   
In another words, it is a matrix with coefficients formed by real closed intervals. In the following text, we will denote every interval structure in boldface. Since an interval vector is a special case of an interval matrix, we define it similarly. We can see that if all intervals in the structures are \emph{degenerate}, i.e, $\ul{A} = \ol{A} $, we get a classical linear algebra. Therefore, interval linear algebra is actually a generalization of the previous one. 

Another way to define an interval matrix is using its \emph{midpoint} matrix $A_c$ and its \emph{radius} matrix $\Delta \geq 0$ as $$\imace{A} = [A_c - \Delta, A_c + \Delta].$$ In the following text we automatically suppose that $A_c, \, \Delta$ represent corresponding midpoint and radius matrix of $\imace{A}$, and 
$b_c, \, \delta$ represent corresponding midpoint and radius vector of $\imace{b}$. 
When we talk about a general square matrix we automatically assume that it is of size $n$. 

We mention some special structures that we will use quite often. The identity matrix is denoted $I$, the matrix containing only ones $E$ and the vector containing only ones $e$. Another useful matrix is $D_y = \diag(y_1, \ldots, y_n)$ a matrix with the vector $y$ as the main diagonal. We often need to describe some properties of interval structures vectors consisting of only $\pm 1$. We denote the set of all $n$-dimensional $\pm 1$ vectors as $Y_n$. 
 A useful concept is a matrix $A_{yz}$ defined as
$$ A_{yz} = A_c - D_y \Delta D_z ,$$
for some given $y,z \in Y_n$. Every its coefficient on the positon $(i,j)$ is an upper or a lower bound of $\imace{A}_{ij}$  depending on the sign of $y_i \cdot z_j$.
We will sometimes need to check spectral radius of a real matrix $A$, we denote it $\varrho(A)$.

Many definitions have an intuitive generalization for interval linear algebra:\\

\emph{An interval matrix $\imace{A}$ has a property $\mathfrak{P}$ if every $ A \in \imace{A}$ has the property $\mathfrak{P}$.} \\

This applies to stability, full column rank, inverse nonnegativity, diagonally dominant matrices, M-matrix and H-matrix property, among others.

Many problems in interval linear algebra are very difficult to be computed exactly (with resulting intervals of tightest possible bounds). 
That is why we inspect the possibility of approximation of these bounds. There are many types of approximation. 
There are several kinds of errors when we approximate a number $a$ -- the absolute, relative \cite{GolLoa1996} and inverse relative \cite{Kre2013} approximation errors.

\begin{definition}
An algorithm computes $a$ with \emph{absolute approximation error} $\eps$ if it computes $a^0$ such that 
$a^0\in[a-\eps,\,a +\eps]$.

An algorithm computes $a$ with \emph{relative approximation error} $\eps$ if it computes $a^0$ such that 
$a^0\in(1+[-\eps,\eps])a$.

An algorithm computes $a$ with \emph{inverse relative approximation error} $\eps$ if it computes $a^0$ such that 
$a\in(1+[-\eps,\eps])a^0$.
\end{definition}

At the end we mention a very useful theorem that we will use very often in this text. It originally comes from the area of numerical mathematics \cite{oettli1964}.

\begin{theorem}[Oettli-Prager]\label{thmOP}
\label{oettliprager}
Let us have an interval matrix and vector $\imace{A}, \imace{b}$. For a real vector $x \in \R^{n} $ it holds $Ax=b$ for some $A \in \imace{A}, b \in \imace{b}$ if and only if
$$ | A_c x - b_c | \leq \Delta |x| + \delta. $$
\end{theorem}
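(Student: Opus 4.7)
The plan is to prove both implications separately; the forward direction is a straightforward estimate, while the backward direction requires an explicit construction of $A\in\imace{A}$ and $b\in\imace{b}$ satisfying $Ax=b$.

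For the forward direction, I would assume $Ax=b$ for some $A\in\imace{A}$ and $b\in\imace{b}$, and write $A=A_c+E$ with $|E|\leq\Delta$ componentwise and $b=b_c+f$ with $|f|\leq\delta$, which is possible by the midpoint-radius representation. Then $A_cx-b_c=(A-E)x-(b-f)=f-Ex$, and applying the componentwise triangle inequality together with the submultiplicative-style bound $|Ex|\leq|E|\,|x|\leq\Delta|x|$ gives the required inequality $|A_cx-b_c|\leq\Delta|x|+\delta$.

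For the backward direction, I would fix $x$ with $|A_cx-b_c|\leq\Delta|x|+\delta$, set $r:=A_cx-b_c$, and look for $E$ and $f$ with $|E|\leq\Delta$, $|f|\leq\delta$ such that $f-Ex=r$; then $A:=A_c+E$ and $b:=b_c+f$ would satisfy $Ax=b$. For each row $i$, define sign vectors $y_i=\sgn(r_i)$ and $z_j=\sgn(x_j)$ (using the convention $\sgn(0)=1$), and set the scaling factor $\alpha_i:=|r_i|/((\Delta|x|)_i+\delta_i)\in[0,1]$ whenever the denominator is positive (and $\alpha_i:=0$ otherwise, in which case the inequality forces $r_i=0$). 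Then choose $E_{ij}:=-y_i z_j \alpha_i \Delta_{ij}$ and $f_i:=y_i\alpha_i\delta_i$; a direct calculation shows $(Ex)_i=-y_i\alpha_i(\Delta|x|)_i$ and thus $f_i-(Ex)_i=y_i\alpha_i((\Delta|x|)_i+\delta_i)=y_i|r_i|=r_i$, as required, while $|E_{ij}|=\alpha_i\Delta_{ij}\leq\Delta_{ij}$ and $|f_i|=\alpha_i\delta_i\leq\delta_i$.

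The main obstacle is the backward direction: it is not purely algebraic manipulation but requires a clever explicit choice. The key insight is that one need not solve a nontrivial optimization — the single scalar $\alpha_i\in[0,1]$ is enough to split the ``defect'' $r_i$ between the matrix perturbation $E$ and the right-hand side perturbation $f$ in the correct proportion. Once this construction is written down, everything else is routine componentwise verification.
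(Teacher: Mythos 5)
Your proof is correct: the forward direction is the standard triangle-inequality estimate, and the backward construction $E_{ij}=-y_iz_j\alpha_i\Delta_{ij}$, $f_i=y_i\alpha_i\delta_i$ with the scaling $\alpha_i\in[0,1]$ verifiably yields $f-Ex=r$ with $|E|\leq\Delta$, $|f|\leq\delta$, including the degenerate rows where $(\Delta|x|)_i+\delta_i=0$. The paper itself gives no proof of Theorem~\ref{thmOP} — it only cites the original 1964 result — and your argument is precisely the classical Oettli--Prager construction, so there is nothing to compare beyond noting that you have supplied the omitted proof correctly.
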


This was just a brief introduction to interval analysis. 
 Interval linear algebra has many important applications -- system verification, model checking, handling uncertain data. For a huge variety of applications see, e.g., \cite{JauKie2001,kearfott:uses,KeaKre1996}.
 For more information or applications in nonlinear mathematics see \cite{moore:introduction}.

\section{Complexity theory background}
Now, we take a small break and dig deeper into the area of computational complexity. With that in mind we return back to interval linear algebra and introduce some well-known issues from the viewpoint of computational complexity.

\subsection{Binary encoding and size of an instance}

For complexity-theoretic classification of interval-theoretic problems,
it is a standard to use the Turing computation model. We assume that
an instance of a~computational problem is formalized as a bit-string, i.e., a finite 0-1 sequence. Thus we cannot work with real-valued instances; instead we usually restrict ourselves to \emph{rational numbers} expressed as fractions $\pm \frac{q}{r}$ with $q, r \in \mathbb{N}$ written down in binary in the coprime form. Then, the \emph{size} of a rational number $\pm \frac{q}{r}$ is understood as the number of bits necessary to write down the sign and both $q$ and $r$ (to be precise, one should also take care of delimiters).
If an instance of a problem consists of multiple rational numbers $A = (a_1, \dots, a_n)$ (e.g.,~when the input is a vector or a matrix), we define
$\size(A) = \sum_{i=1}^n \size(a_i).$

In interval-theoretic problems, inputs of algorithms are usually interval numbers, vectors or matrices. When we say that an algorithm is to process 
an $m\times n$ interval matrix~$\imace{A}$, we understand that the algorithm is given the pair 
$(\ul{A} \in \mathbb{Q}^{m \times n}, \,
\ol{A} \in \mathbb{Q}^{m \times n})$
and that the size of the input is $L := \size(\ul{A}) + \size(\ol{A})$.
Whenever we speak about \emph{complexity} of such algorithm, we
mean a function $\phi(L)$ counting the number of steps of the corresponding Turing machine as a function of the bit-size $L$ of the input 
$(\ul{A}, \ol{A})$.

Although the literature focuses mainly on
the Turing model (and here we also do so), it is challenging to investigate the behavior
of interval-theoretic problems in other computational models, 
such as the Blum-Shub-Smale (BSS) model for real-valued computing
\cite{BSS} or the quantum model \cite{arorabarak}.

\subsection{Functional problems and decision problems}

Formally, a \emph{functional problem} $\pro{F}$ is
a \emph{total} (defined for each input) function $\pro{F} :\{0,1\}^* \rightarrow \{0,1\}^*$, 
where $\{0,1\}^*$ is the set of all finite bit-strings.
A \emph{decision problem} (or \emph{YES/NO problem}) 
$\pro{A}$ 
is a total function $\pro{A} : \{0,1\}^* \rightarrow \{0,1\}$.

If there exists a Turing machine computing $\pro{A}(x)$ for every $x \in \{0,1\}^*$, we say that the problem $\pro{A}$ (either decision or functional) is \emph{recursive}.

It is well known that many decision problems in mathematics 
are nonrecursive; e.g., deciding whether a given formula is provable in Zermelo-Fraenkel Set Theory is nonrecursive by the famous G\"{o}del Incompleteness Theorem. Fortunately, a majority of decision problems in interval linear algebra are recursive. Such problems can usually be written down as arithmetic formulas (i.e., quantified formulas containing natural number constants, 
arithmetical operations $+, \times$, relations $=, \leq$ and propositional connectives). 
Such formulas are decidable (over the reals) by Tarski's Quantifier Elimination Method \cite{Ren1992a,Ren1992b,Ren1992c}.

\begin{itemize}
\item \emph{Example~A: Regularity of an interval matrix.} Each matrix $A\in\imace{A}$ is nonsingular iff $(\forall A)[\ul{A} \leq A \leq \ol{A} \rightarrow \det(A) \neq 0]$. This formula is arithmetical since $\det(\cdot)$ is a polynomial, and thus it is expressible in terms of $+, \times$.
\item \emph{Example~B: Is a given $\lambda\in \mathbb{Q}$ the largest  eigenvalue of some symmetric $A \in \imace{A}$?} 
This question can be written down as
$
(\exists A)
[A = A^T \ \&\ \ul{A} \leq A \leq \ol{A}
\ \&\ 
(\exists x \neq 0) 
[Ax = \lambda x]
\ \&\ 
(\forall \lambda')
\{(\exists x'\neq 0)
[Ax' = \lambda' x']
\rightarrow \lambda'\leq\lambda\}].
$
\end{itemize}
Although Quantifier Elimination proves recursivity, it is 
a highly inefficient method from the practical viewpoint 
--- the computation time can be doubly exponential in general.
In spite of this, for many problems, 
reduction to Quantifier Elimination is the only 
(and thus ``the best'') known algorithmic result.

\subsection{Weak and strong polynomiality}

It is a usual convention to say that a problem $\pro{A}$ is ``efficiently'' solvable if it is solvable in polynomial time, i.e.,~in at most $p(L)$ steps of the corresponding Tu\-ring machine, where $p$ is a polynomial and $L$ is the size of the input. The class of 
efficiently solvable decision problems is denoted by P.

Taking a more detailed viewpoint, this is a definition of polynomial-time solvability in the \emph{weak} sense. 
In our context, we are usually processing a family $a_1, \dots, a_n$ of rational numbers, where
$L = \sum_{i=1}^n \size(a_i)$, performing
arithmetical operations $+,-,\times,\div,\leq$ with them. 
The definition of (weak) polynomiality
implies that an algorithm
\emph{can perform at most $p_1(L)$ arithmetical operations with numbers of size at most $p_2(L)$ during its computation}, where $p_1, p_2$ are polynomials.

If a polynomial-time algorithm satisfies the stronger 
property that it
\emph{performs at most $p_1(n)$ arithmetical operations with numbers of size at most $p_2(L)$ during its computation}, we say that it is \emph{strongly polynomial}.
The difference is whether we can bound the number of arithmetical operations only by a polynomial in $L$, or by a polynomial in $n$.

\textbf{Example.} Given a rational $A$ and $b$, 
the question $(\exists x)[Ax = b]$
can be decided in strongly polynomial time (although it is nontrivial to
implement the Gaussian elimination to yield a strongly polynomial algorithm). 
On the contrary, the question $(\exists x)[Ax \leq b]$ (which is a form of linear programming) is known to be solvable in weakly polynomial time only
and it is a major open question whether a strongly polynomial algorithm exists (this is Smales's Ninth Millenium Problem, see \cite{smale}).

The main message of the previous example is: whenever an interval-algebraic problem is solvable in polynomial time and requires linear programming (which is a frequent case), it is only a weakly polynomial result. This is why the rare cases, when interval-algebraic problems are solvable in strongly polynomial time, are of special interest.
%
%
%

\subsection{NP, coNP}
\label{orthdecompose} 
Recall that NP is the class of decision problems $\pro{A}$ with the following property:
 there is a polynomial $p$ and a decision problem $\pro{B}(x,y)$, solvable in time polynomial in $\size(x)+\size(y)$, such that, for any 
instance $x\in\{0,1\}^*$, 
\begin{equation}
\pro{A}(x) = 1 \textrm{\ \ iff \ \ } (\exists y \in \{0,1\}^*)\   
\underbrace{\size(y) \leq p(\size(x))}_{(\star)}\ \textrm{\ and\ }\pro{B}(x,y) = 1. 
\label{eq:defNP}
\end{equation}
The string $y$ is called \emph{witness} for the $\exists$-quantifier,
or also \emph{witness} of the fact that $\pro{A}(x) = 1$.
The algorithm for $\pro{B}(x,y)$ is called \emph{verifier}.
For short, we often write 
$
\pro{A}(x) = (\exists^p y)\pro{B}(x,y),
$
showing that $\pro{A}$ results from
the $\exists$-quantification of the efficiently decidable question $\pro{B}$ 
(and the quantifier ranges over strings of polynomially bounded size). Observe
that the question $(\exists^p y)\pro{B}(x,y)$ need not be decidable in polynomial time (in fact, this is the open problem ``$\pP =^{?} \NP$''), since the quantification range is exponential 
in $\size(x)$. 

A lot of $\exists$-problems from various areas of mathematics 
are in NP:
\emph{``does a given boolean formula $x$ 
have a satisfying assignment $y$?''}, 
\emph{``does a given graph $x$ have $3$-coloring $y$?''}, 
\emph{``does a given system $x = \textrm{`}Ay \leq b\textrm{'}$
have an integral solution $y$?''},
and many others.

The class coNP is characterized by replacement of the quantifier in
(\ref{eq:defNP}):
$$ \pro{A}(x) = 1 \textrm{\ \ iff \ \ } (\forall y \in \{0,1\}^*)\ \size(y) \leq p(\size(x)) \rightarrow \pro{B}(x,y) = 1. $$
It is easily seen that the class coNP is formed of complements of NP-problems,
and vice versa. (Recall that a decision problem $\pro{A}$ is a 0-1 function;
its \emph{complement} is defined as $\pro{coA} = 1 - \pro{A}$.)

The prominent example of a coNP-question is deciding whether a boolean formula is a tautology, or in other words, \emph{``given a boolean formula $x$, is it true that \textbf{every} assignment $y$ makes it true\mbox{?}``}.

It is easy to see again that deciding a coNP-question
can take exponential time since the $\forall$-quantifier ranges over a set exponentially large in $\size(x)$.

\textbf{Example.}
Interval linear algebra is not an exception: a lot of $\exists$-questions belong to NP, but we should be careful a bit.
As an example, consider the problem $\pro{SINGULARITY}$: 
given $\imace{A} \in\mathbb{IQ}^{n\times n}$, 
$\exists A \in \imace{A}$ which is singular? We could expect 
that $\pro{SINGULARITY} \in \NP$ since the positive answer 
can be certified
by the $\exists$-witness 
$A_0 = $ \emph{a particular singular matrix in $\imace{A}$}.
Indeed, the natural verifier $\pro{B}(\imace{A},A_0)$, checking whether
$A_0 \in \imace{A}$ and $A_0$ is singular, works in polynomial time.
But a problem is hidden in the condition $(\star)$ in (\ref{eq:defNP}).
To be fully correct, we would have to prove: \emph{there exists a polynomial $p$ such that whenever $\imace{A}$ contains 
a singular matrix, then it also contains a rational 
singular matrix $A_0$ such that $\size(A_0) \leq p(L)$,
where $L = \size(\ul{A}) + \size(\ol{A})$.} Direct proofs of such properties are 
``uncomfortable''. But we can proceed in a more elegant way, using 
Theorem~\ref{oettliprager}: 
\begin{eqnarray}
\nonumber
& \exists A \in & \imace{A} \textrm{\ s.t.~$A$ is singular}  \\ \nonumber
&\Leftrightarrow\ & \exists A \in \imace{A},\ \exists x \neq 0 \textrm{\ s.t.\ } Ax = 0  \\ \nonumber
& \Leftrightarrow\ & \exists x \neq 0 \textrm{\ s.t.\ } -\Delta|x| \leq A_c x \leq \Delta|x|,  \\ \label{eq:expr}
& \Leftrightarrow\ & \exists s\in\{\pm 1\}^n \underbrace{\exists x
\textrm{\ s.t.\ } -\Delta D_s x \leq A_c x \leq \Delta D_s x,\ D_s x \geq 0,\ e^TD_sx\geq 1}_{(\dag)}. 
\end{eqnarray}

Given $s\in\{\pm 1\}^n$, the relation 
(\dag) can be checked in polynomial time by linear programming.
Thus, we can define the verifier $\pro{B}(\imace{A},s)$ as
the algorithm checking the validity of (\dag). 
In fact, we have reformulated the $\exists$-question,
\emph{``is there a singular $A\in\imace{A}$?''}, into an equivalent 
$\exists$-question, \emph{``is there a sign vector $s\in\{\pm 1\}^n$
s.t.~$(\dag)$ holds true?''}, and now  
$\size(s) \leq L$ is obvious. 

The method of (\ref{eq:expr}) is known as \emph{orthant decomposition} since it reduces the problem to inspection of orthants $D_s x \geq 0$, 
for every $s\in\{\pm 1\}^n$, and the work in each orthant is ``easy'' 
(here, the work in an orthant amounts to a single linear program). Many properties with interval data are described by sufficient and necessary conditions that use orthant decomposition. 

We can also immediately see that 
$\pro{REGULARITY} = \pro{coSINGULARITY}$ (\emph{``given $\imace{A}$, is \textbf{every} $A\in \imace{A}$ nonsingular?''}) belongs to coNP.

\subsection{Decision problems: NP-, coNP-completeness}
\label{eigen}

A decision problem $\pro{A}$ is \emph{reducible} to a decision 
problem $\pro{B}$ (denoted $\pro{A} \leq \pro{B}$) if there 
exists a polynomial-time computable function $g: \{0,1\}^* \rightarrow \{0,1\}^*$,
called \emph{reduction}, such that for every $x\in\{0,1\}^*$ we have 
\begin{equation}
\pro{A}(x) = \pro{B}(g(x)). 
\label{eq:mor}
\end{equation}
Said informally, any algorithm for $\pro{B}$ can also be used for solving $\pro{A}$: given an instance $x$ of $\pro{A}$, we can efficiently ``translate'' it into an instance $g(x)$ of the problem $\pro{B}$ and run the method deciding $\pro{B}(g(x))$, yielding the correct answer to $\pro{A}(x)$. Thus, any decision method for $\pro{B}$ is also a valid method for $\pro{A}$, if we admit the polynomial time for computation of the reduction $g$. In this sense we can say that if $\pro{A} \leq \pro{B}$, then $\pro{B}$
``as hard as $\pro{A}$, or harder''. 
If both $\pro{A} \leq \pro{B}$ and $\pro{B} \leq \pro{A}$, then problems $\pro{A}, \pro{B}$ are called \emph{polynomially equivalent}. 

The relation $\leq$ induces a partial ordering on classes of polynomially equivalent problems in $\NP$ (called \emph{NP-degrees}) and this ordering can be shown to have a maximum element. The problems in the maximum class
are called \emph{NP-complete} problems. And similarly, coNP has a class of \emph{coNP-complete} problems. They are complementary: a problem $\pro{A}$ is NP-complete iff its complement is coNP-complete.

Let $\mathcal{X} \in \{\NP,\coNP\}$.
If a problem $\pro{B}$ is $\mathcal{X}$-complete, any method for it can be understood as a universal method for any problem $\pro{A}\in\mathcal{X}$, modulo polynomial time needed for computing the reduction. Indeed, since $\pro{B}$ is the maximum element, we have $\pro{A} \leq \pro{B}$ for any $\pro{A}\in\mathcal{X}$. It is generally believed that $\mathcal{X}$ contains problems which are not efficiently decidable.  
In NP, boolean satisfiability is a prominent example;
in coNP, it is the tautology problem. 
Then, by $\leq$-maximality, no $\mathcal{X}$-complete problem is efficiently decidable. This shows why a proof of $\mathcal{X}$-completeness of a newly studied problem is often understood as proof of its 
computational \emph{intractability}. 

\textbf{Remark.} From a practical perspective, a proof of NP- or coNP-completeness is the same bad news, telling us that ``nothing better than superpolynomial-time algorithms can be expected''.
But formally we must distinguish between NP- and co-NP completeness because it is believed that NP-complete problems are not polynomially equivalent with coNP-complete problems. (This is the 
``NP\ $=^{?}$\ coNP''
open problem). 

\textbf{NP- and coNP-complete problems in interval analysis.} A survey of such problems forms the core of this paper. An important example of an NP-complete problem is \pro{SINGULARITY} of an interval matrix $\imace{A}$.
Its complement, \pro{REGULARITY}, is thus coNP-complete.

When we know that $\pro{B}$ is $\mathcal{X}$-complete and we prove $\pro{B} \leq \pro{C}$ for a problem $\pro{C}\in \mathcal{X}$, then $\pro{C}$ is also $\mathcal{X}$-complete.
This is \emph{the} method behind all $\mathcal{X}$-completeness proofs of this paper.
For example, let \pro{EIGENVALUE} be the problem ``given a square interval matrix $\imace{A}$ and a number $\lambda$, decide whether $\lambda$ is an eigenvalue of some $A\in\imace{A}$''. It is easy to prove
$\pro{SINGULARITY} \leq \pro{EIGENVALUE}$; indeed, if we are to decide
whether there is a singular matrix $A \in \imace{A}$, it suffices to
use the reduction $g : \imace{A} \mapsto (\imace{A}, \lambda = 0)$.
The proof of $\pro{EIGENVALUE}\in\NP$ can be derived from the orthant decomposition method; this proves that \pro{EIGENVALUE} is an NP-complete problem.

\subsection{Decision problems: NP-, coNP-hardness}

We restrict ourselves to NP-hard problems; the reasoning for coNP-hard
problems is analogous.

In the previous section we spoke about NP-complete problems
as the $\leq$-maximum elements in NP. 
But our reasoning can be more general. We can work on the entire class
of decision problems, including those outside NP.
We say that a decision problem $\pro{H}$, not necessarily in NP, satisfying $\pro{C} \leq \pro{H}$ for an NP-complete problem $\pro{C}$, is \emph{NP-hard}. Clearly: NP-complete problems are exactly those NP-hard problems which are in NP. But we might encounter a problem $\pro{H}$ for which we do not have the proof $\pro{H}\in \NP$, but still it might be possible to prove $\pro{C} \leq \pro{H}$. Then the bad news for practice is again the same, that the problem \pro{H} is computationally intractable. 
(But we might possibly need even worse computation time than for NP-problems; recall that all problems in NP can be solved in exponential time, not worse.)

To summarize: a proof that a decision problem is NP-hard is a weaker
theoretical 
result than a proof that a decision problem is NP-complete; it leads to an immediate research problem to inspect \emph{why it is difficult to prove the presence in NP}. Usually, the reason is that it is not easy (or impossible at all) 
to write down the $\exists$-definition; recall the example (\ref{eq:expr}), where
the proof of presence in NP required the aid of Theorem~\ref{oettliprager}.

 \textbf{Remark.} If we are unsuccessful in placing the problem in NP or coNP, being unable to write down the $\exists$- or $\forall$-definition, it might be appropriate to place the problem $\pro{H}$ 
into higher levels of the Polynomial Time Hierarchy, or even higher, such as the PSPACE-level; for details see \cite{arorabarak}, Chapter~5.


\subsection{Functional problems: efficient solvability and NP-hard\-ness}

Functional problems are problems of computing values of general functions, in contrast to decision problems where we expect only YES/NO answers. We also want to classify functional problems from the complexity-theoretic perspective, whether they are ``efficiently solvable'', or ``intractable'', as we did with decision problems. Efficient solvability of a functional problem is again generally understood as polynomial-time computability. 
To define NP-hardness, we need the following notion of reduction: a decision problem $\pro{A}$ is \emph{reducible} to a functional problem \pro{F}, if there exist functions $g : \{0,1\}^* \rightarrow \{0,1\}^*$
and $h: \{0,1\}^* \rightarrow \{0,1\}$, both computable in polynomial time, such that 
\begin{equation}
\pro{A}(x) = h(\pro{F}(g(x))) \ \ \textrm{for every}\ \ x\in\{0,1\}^*. 
\label{eq:rtwo}
\end{equation}
The role of $g$ is analogous to (\ref{eq:mor}): 
it translates an instance $x$ of $\pro{A}$ into an instance $g(x)$ of $\pro{F}$. 
What is new here is the function $h$. Since $\pro{F}$ is a functional problem, the value $\pro{F}(g(x))$ can be an arbitrary bitstring 
(say, a binary representation of a rational number); then we need another efficiently computable function $h$ translating the value $\pro{F}(g(x))$ into a 1-0 value giving the YES/NO answer to $\pro{A}(x)$. A trivial example: deciding regularity of a rational matrix
(decision problem $\pro{A}$) 
is reducible to the computation of rank (functional problem $\pro{F}$). 
It suffices to define $g(A) = A$ and 
$h(\zeta) = 1-\min\{n-\zeta, 1\}$. 

Now, a functional problem $\pro{F}$ is \emph{NP-hard} if there is an NP-hard decision problem reducible to $\pro{F}$. For example, the functional problem 
of counting the number of ones in the truth-table of a given boolean formula is NP-hard since this information allows us to decide whether or not the formula is satisfiable.

 \textbf{Remark. It is not necessary to distinguish between NP-hardness and coNP-hardness for functional problems.} 
We could also try to define coNP-hardness of a functional problem
$\pro{G}$ in terms of reducibility of a coNP-hard decision 
problem $\pro{C}$ to $\pro{G}$ via (\ref{eq:rtwo}).
But this is superfluous because here NP-hardness and coNP-hardness coincide.
Indeed, if we can reduce a coNP-hard problem $\pro{C}$ to a functional problem $\pro{G}$ via $(g,h)$, then we can also reduce the NP-hard problem $\pro{coC}$ to $\pro{G}$ via $(g,1-h)$. Thus, in case of functional problems, we speak about NP-hardness only. 


\subsection{More general reductions: 
do we indeed have to distinguish between NP-hardness 
and coNP-hardness of decision problems?} 
In literature, the notions of NP-hardness and coNP-hardness
are sometimes used quite freely even for \emph{decision problems}. Sometimes
we can read that a decision problem is ``NP-hard'', even if it would qualify as a coNP-hard problem under our definition based on the reduction 
(\ref{eq:mor}). This is nothing serious as far as we are aware. It depends how the author understands the notion of a reduction between two decision problems. We have used the \emph{many-one} reduction (\ref{eq:mor}), known also as \emph{Karp} reduction, between two decision problems. This is a standard in complexity-theoretic literature. 

However, one could use a more general reduction between two decision problems $\pro{A}, \pro{B}$. For example, taking inspiration from
$(\ref{eq:rtwo})$,
we could define ``$\pro{A} \leq' \pro{B}$ iff $\pro{A}(x) = h(\pro{B}(g(x)))$ for some polynomial-time computable functions $g,h$''. Then the notions of $\leq'$-NP-hardness and $\leq'$-coNP-hardness coincide and need not be distinguished. 
(Observe that $h$ must be a function from $\{0,1\}$ to $\{0,1\}$ and there are only two such nonconstant functions: $h_1(\xi) = \xi$ and $h_2(\xi) = 1-\xi$. If we admit only $h_1$, we get the many-one reduction; if we admit also the negation $h_2$, 
we have a generalized reduction
under which a problem is NP-hard iff it is coNP-hard.
Thus: the notions of NP-hardness and coNP-hardness based on many-one reductions do not coincide just because many-one reductions \emph{do not admit the negation} of the output of $\pro{B}(g(x))$.) 

To be fully precise, one should always say ``a problem $\pro{A}$ is $\mathcal{X}$-hard w.r.t.~a particular reduction $\preceq$''. For example, in the previous sections we spoke 
about $\mathcal{X}$-hard problems for $\mathcal{X} \in \{\NP,\coNP\}$
w.r.t.~the many-one reduction (\ref{eq:mor}). If another author uses 
$\mathcal{X}$-hardness w.r.t.~$\leq'$ (e.g., ~because (s)he considers the ban of negation as too restrictive in 
her/his context), 
then (s)he need not distinguish between NP-hardness and coNP-hardness. 

For the sake of completeness, we conclude that in literature we can meet the notions of hardness w.r.t.~various types of reductions.

\emph{Logspace-computable reduction:} $\pro{A} \leq_{\log} \pro{B}$ iff there is a function $g$ computable in memory of size $O(\log\size(x))$, such that $\pro{A}(x) = \pro{B}(g(x))$ for every $x$. (This reduction in weaker than (\ref{eq:mor}) since every logspace-computable function is also computable in polynomial time.)

\emph{Truth-table reduction:} $\pro{A} \leq_{tt} \pro{B}$ 
iff there is a finite number of polynomial-time computable functions
$g_1, \dots, g_k : \{0,1\}^* \rightarrow \{0,1\}^*$ 
and a ``truth-table'' function $h : \{0,1\}^k \rightarrow \{0,1\}$ such that
$\pro{A}(x) = h(\pro{B}(g_1(x)), \dots, \pro{B}(g_k(x)))$. This reduction is a generalization of $\leq'$; indeed, $\leq'$ is a restricted truth-table 
reduction with a two-line truth table. 
Under $\leq_{tt}$, to decide $\pro{A}(x)$ one can compute $k$ instances of $\pro{B}$ from which the boolean expression $h$ combines the result $\pro{A}(x)$. 

\emph{Turing reduction} (or \emph{Cook reduction}): $\pro{A} \leq_T \pro{B}$ iff there is a polynomial-time algorithm (Turing machine) $Q$, equipped with a subroutine (an algorithm, \emph{oracle}) computing 
$\pro{B}$, and the entire computation of $\pro{B}$ is counted as a single step of $Q$. This is the most general type of reduction: when deciding $\pro{A}(x)$, the reduction allows for a polynomial number of computations of $\pro{B}(y)$
with $\size(y)$ polynomially bounded in $\size(x)$, and the results can be combined in an arbitrary way; the only limitation is that the overall number of steps is polynomial in $\size(x)$, assuming that one computation of $\pro{B}(y)$ is at the unit cost. 

The above mentioned reductions can be ordered in the sequence 
according to their generality:
$\pro{A} \leq_{\log} \pro{B} \Rightarrow
\pro{A} \leq \pro{B} \Rightarrow 
\pro{A} \leq' \pro{B} \Rightarrow
\pro{A} \leq_{tt} \pro{B} \Rightarrow
\pro{A} \leq_{T} \pro{B}$,
where ``$\Rightarrow$'' means ``implies''.
We know that NP-hardness and coNP-hardness coincide for $\leq'$, and thus also for the generalizations $\leq_{tt}$, $\leq_T$.

\subsection{A reduction-free definition of hardness}

For practical purposes, when we do not want to play with properties of particular reductions, we can define the notion of a ``hard'' problem $\pro{H}$ (either decision of functional) intuitively as a problem fulfilling this implication: \emph{if $\pro{H}$ is decidable/solvable in polynomial time, then $\pP = \NP$}. This is usually satisfactory for the practical understanding of the notion of computational hardness. (Under this definition: if $\pP = \NP$, then every decision problem is hard; and if $\pP \neq \NP$, then the class of hard decision problems is exactly the class of decision problems not decidable in polynomial time, including all NP-hard and coNP-hard decision problems.)

Even if we accept this definition and do not speak about reductions explicitly, 
all hardness proofs (at least implicitly) contain
some kinds of reductions of previously known hard problems to the 
newly studied ones.

\section{Interval linear algebra -- part II}
In the following sections we will deal with various problems in interval linear algebra. There are many interesting topics that are unfortunately beyond the scope of this work. We will at least point out some of them in section \ref{further}. We chose basic topics from introductory courses to linear algebra -- regularity and singularity of a matrix, full column rank, solving and solvability of a system of linear equations, matrix inverse, determinant, eigenvalues and eigenvectors, positive (semi)definiteness and stability. The next chapters will offer a great disappointment and also a great challenge, since implanting intervals into a classical linear algebra makes solving most of the problems intractable. That is why, we look for solving relaxed problems, special feasible subclasses of problems or for sufficient conditions checkable in polynomial time. Interval linear algebra still offers many open problems and a lot of place for further research. At the end of each section we present a summary of problems and their complexity. If we only know that a problem is weakly polynomial yet, we just write that it belongs to the class $P$. When complexity of a problem is not known to our best knowledge (or it is an open problem), we mark it with question mark.

\subsection{Regularity and singularity}  
 
Deciding regularity and singularity of an interval matrix is an important task in linear algebra . The definition of interval regularity (and singularity) is intuitive. 
\begin{definition}
A square interval matrix $\imace{A}$ is \emph{regular} if every $A \in \imace{A}$ is nonsingular. Otherwise, $\imace{A}$ is called \emph{singular}.
\end{definition}  
Considering complexity we can find in the literature the following theorem \cite{rohn:checking} giving NP-completeness result even for the simple case. 
\begin{theorem}
Deciding whether an interval matrix $\imace{A} = [A - E, A + E]$ is singular for some nonnegative symmetric positive definite rational matrix $A$ is NP-complete.
\end{theorem}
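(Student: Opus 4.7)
The proof should proceed in two steps, establishing membership in NP and NP-hardness.

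Membership in NP comes essentially for free from the orthant decomposition argument of (\ref{eq:expr}) developed for the general \pro{SINGULARITY} problem: a sign vector $s\in\{\pm 1\}^n$ is a polynomial-size witness and the verifier decides the system $(\dag)$ by linear programming. Since our restricted problem, where the input must take the shape $[A-E,A+E]$ with $A$ nonnegative symmetric positive definite rational, is a sub-case of \pro{SINGULARITY}, the same witness/verifier pair immediately establishes NP membership.

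For NP-hardness I would give a polynomial many-one reduction from a known NP-complete problem about nonnegative symmetric positive definite matrices, for instance the Poljak--Rohn style problem of deciding, given such a rational $B$ and a rational $\alpha$, whether there exists $z\in Y_n$ with $z^TBz\geq \alpha$ (classically reducible from \pro{PARTITION}). Starting from an instance $(B,\alpha)$, the plan is to construct in polynomial time a rational nonnegative symmetric positive definite $A$, of size polynomial in $(B,\alpha)$, such that $[A-E,A+E]$ is singular iff the threshold $\alpha$ is attained.

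The technical core rests on Theorem~\ref{oettliprager}: $[A-E,A+E]$ is singular iff there exists $x\neq 0$ with $|Ax|\leq E|x|=\|x\|_1 e$, i.e.~$\|Ax\|_\infty \leq \|x\|_1$. The plan is to design $A$ as a carefully chosen affine combination of $I$, $E$ and $B$, scaled and shifted so that this analytic inequality translates back into the combinatorial condition ``$\exists z\in Y_n: z^TBz\geq \alpha$'' on the source instance. The main obstacle I expect is meeting all three structural constraints on $A$ simultaneously: nonnegativity is enforced by a positive scaling and shift, symmetry is inherited from $B$, but \emph{positive definiteness} is delicate and will typically be secured by adding a sufficiently large multiple of $I$, whose effect must then be absorbed back into the threshold via the $\|x\|_1=n$ term on the right-hand side of Oettli--Prager without breaking the combinatorial equivalence. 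The numerical balancing of these shifts, while keeping all entries rational and the input size polynomial, is the step I would spend the most care on.
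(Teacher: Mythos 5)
Your NP-membership argument is correct and is essentially the paper's own argument for the unrestricted \pro{SINGULARITY} problem: the sign vector $s$ from the orthant decomposition (\ref{eq:expr}) is a polynomial-size witness verified by linear programming, and restricting to instances of the special form $[A-E,A+E]$ cannot destroy membership. The paper, however, records a sharper witness tailored to this class: for $A$ symmetric positive definite the rank-one perturbation $A - \frac{zz^T}{z^TA^{-1}z}$ is always singular, and it lies in $[A-E,A+E]$ exactly when $z^TA^{-1}z\geq 1$; hence a single $z\in Y_n$ with $z^TA^{-1}z\geq 1$ certifies singularity. This characterization is worth adopting, because it is also the combinatorial handle that the hardness half must hit.

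The genuine gap is in your hardness reduction. The source problem you name (deciding whether $z^TBz\geq\alpha$ for some $z\in Y_n$, with $B$ nonnegative symmetric positive definite rational) is indeed the classical Poljak--Rohn route behind the citation \cite{rohn:checking} --- note the paper itself does not reprove hardness but defers to that reference --- yet your construction points in a direction that does not work. Your Oettli--Prager reformulation $\|Ax\|_\infty\leq\|x\|_1$ is equivalent, for nonsingular $A$, to $\max_{y,z\in Y_n}y^TA^{-1}z\geq 1$, i.e.\ to $\max_{z\in Y_n}z^TA^{-1}z\geq 1$ in the symmetric positive definite case: the discrete quadratic form governing singularity lives on $A^{-1}$, not on $A$. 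The naive choice $A:=\alpha B^{-1}$ matches the threshold exactly (since $z^TA^{-1}z=z^TBz/\alpha$) but destroys nonnegativity of $A$, because inverses of nonnegative positive definite matrices are generally not nonnegative; conversely, building $A$ as an affine combination of $I$, $E$ and $B$ keeps $A$ nonnegative but puts $B$ on the wrong side of the inversion, so $\|Ax\|_\infty\leq\|x\|_1$ no longer reduces to $z^TBz\geq\alpha$. You have also misplaced the delicate constraint: positive definiteness is the easy one (add a multiple of $I$), while the real tension is between nonnegativity of $A$ and encoding the hard instance in $A^{-1}$. Resolving that tension requires choosing the source instances with special structure (Rohn's MC-matrices) rather than an arbitrary $(B,\alpha)$, and this is precisely the content of the cited result that your plan leaves open.
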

We can prove NP-hardness of this decision problem. Moreover, we get NP-complete\-ness since we know that a singular $\imace{A}$ in this form mentioned in the theorem must contain a singular matrix
$$ A - \frac{zz^T}{z^T A^{-1} z},$$
for some $z \in \{\pm 1\}^n$ \cite{rohn:checking} which is a polynomial witness and the above mentioned matrix is checkable in polynomial time (e.g., by Gaussian elimination). This implies that deciding singularity of a general interval matrix is NP-hard. However, in the section \ref{orthdecompose} we saw the construction of a polynomial witness $z \in \{\pm 1\}^n$ certifying that an interval matrix is singular. Hence, we get that checking singularity of a general interval matrix is NP-complete. Clearly, checking regularity as the complement problem to singularity is coNP-complete.  

The sufficient and necessary conditions for checking regularity are of exponential nature. In \cite{rohn:forty} you can see 40 of them. 
For example, we can use the classical definition of matrix regularity (a matrix $A$ is regular if the system $Ax=0$ has only trivial solution) and combine it with Oettli-Prager theorem. We get that an interval matrix is regular if and only if the inequality $$ |A_c x| \leq \Delta|x|, $$
has only trivial solution. 

Fortunately, there are some sufficient conditions that are computable in polynomial time. 
It is advantageous to have more conditions, because some of them may suit better to a certain class of matrices or limits of our software tools. Here we present three sufficient conditions for checking regularity and three sufficient conditions for checking singularity.

\begin{theorem}[Sufficient conditions for regularity]
An interval matrix $\imace{A} = [A_c - \Delta, A_c + \Delta]$ is regular if at least one of the following conditions holds
\begin{enumerate}
\item $ \varrho(|A^{-1}_c|\Delta) < 1$ \ \cite{rohn:checking}, \vspace{3pt}
\item $ \sigma_{\max}(\Delta) < \sigma_{\min}(A_c)$ \ \cite{rump1994verification}, \vspace{3pt}
\item $ A^T_c A_c - \| \Delta^T \Delta\| I$ is positive definite for some consistent matrix norm $\|\cdot\|$ \ \cite{rex:sufficient}.
\end{enumerate}

\end{theorem}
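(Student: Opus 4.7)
The three parts share the same outline: suppose for contradiction that $\imace{A}$ is singular, so there exist $A = A_c + E$ with $|E| \le \Delta$ and $x \ne 0$ satisfying $Ax = 0$, and derive a violation of the respective hypothesis. For part (1), the hypothesis implicitly requires $A_c^{-1}$ to exist, and $Ax = 0$ rewrites as $x = -A_c^{-1}Ex$. Passing to componentwise absolute values and using $|E|\le\Delta$ yields the nonnegative inequality
$$|x| \;\le\; |A_c^{-1}|\,|E|\,|x| \;\le\; |A_c^{-1}|\,\Delta\,|x|.$$
Since $|x|\ge 0$ and $|x|\ne 0$, a Collatz--Wielandt / Perron--Frobenius argument for nonnegative matrices forces $\varrho(|A_c^{-1}|\Delta)\ge 1$, contradicting the assumption.

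For part (2), I would invoke the Weyl perturbation bound $\sigma_{\min}(A_c+E) \ge \sigma_{\min}(A_c) - \|E\|_2$ for singular values. Monotonicity of the spectral norm under entrywise absolute value gives $\|E\|_2 \le \| |E| \|_2 \le \|\Delta\|_2 = \sigma_{\max}(\Delta)$, which by assumption is strictly below $\sigma_{\min}(A_c)$; hence $\sigma_{\min}(A) > 0$ and $A$ is nonsingular. Note that (2) is just the spectral-norm specialisation of (3).

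For part (3), I would estimate $\|Ax\|_2$ from below via the reverse triangle inequality $\|Ax\|_2 \ge \|A_c x\|_2 - \|Ex\|_2$ and show the right-hand side is positive for every $x\ne 0$. The hypothesis is equivalent to $\lambda_{\min}(A_c^T A_c) > \|\Delta^T\Delta\|$, and the Rayleigh quotient yields $\|A_c x\|_2^2 \ge \lambda_{\min}(A_c^T A_c)\,\|x\|_2^2$. The matching upper bound on $E$ is the chain
$$\|Ex\|_2^2 \;\le\; \varrho(E^T E)\,\|x\|_2^2 \;\le\; \varrho(|E^T E|)\,\|x\|_2^2 \;\le\; \varrho(\Delta^T\Delta)\,\|x\|_2^2 \;\le\; \|\Delta^T\Delta\|\,\|x\|_2^2,$$
which uses, in turn, the Rayleigh quotient; the classical fact $\varrho(M)\le\varrho(|M|)$ (proved via $|Mv|\le|M|\,|v|$ applied to a Perron eigenvector); the componentwise domination $|E^T E| \le |E|^T|E| \le \Delta^T\Delta$ together with monotonicity of $\varrho$ on the cone of nonnegative matrices; and the fact that $\varrho(\cdot) \le \|\cdot\|$ for every consistent matrix norm. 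Combining, $\|A_c x\|_2 > \|Ex\|_2$ for every $x\ne 0$, so $Ax\ne 0$ and $A$ is nonsingular.

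The main obstacle is part (3): the hypothesis is stated in an \emph{arbitrary} consistent matrix norm, whereas the singularity witness $Ax=0$ is an entrywise statement. Since no direct comparison between $\|E\|$ and $\|\Delta\|$ in that chosen norm is available, the argument must detour through the spectral radius $\varrho$, which simultaneously respects the componentwise bound $|E^T E|\le\Delta^T\Delta$ and is dominated by every consistent norm. Once this norm-free bridge is in place, the remaining spectral estimates are routine.
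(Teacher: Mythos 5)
The paper states this theorem as a survey result and gives no proof of its own, only citations to Rohn, Rump, and Rex--Rohn, so there is nothing in the text to compare against. Your argument is correct and is essentially the standard one from those references: the Perron--Frobenius subinvariance argument ($0\le u\le Mu$, $u\ne0$ forces $\varrho(M)\ge1$) for (1), Weyl's singular-value perturbation bound together with monotonicity of $\|\cdot\|_2$ on nonnegative matrices for (2), and for (3) the spectral-radius bridge $\varrho(E^TE)\le\varrho(|E^TE|)\le\varrho(\Delta^T\Delta)\le\|\Delta^T\Delta\|$, which is exactly the right device for handling the arbitrary consistent norm in the hypothesis.
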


\begin{theorem}[Sufficient conditions for singularity]
An interval matrix $\imace{A} = [A_c - \Delta, A_c + \Delta]$ is singular if at least one of the following conditions holds
\begin{enumerate}
\item $ \max_j(|A^{-1}_c| \Delta)_{jj} \geq 1$ \ \cite{rohn1989systems}, \vspace{3pt}
\item $ (\Delta - |A_c|)^{-1} \geq 0$ \ \cite{rohn:checking},\vspace{3pt}
\item $ \Delta^T \Delta\ - A^T_c A_c $ is positive semidefinite \ \cite{rex:sufficient}.
\end{enumerate}
\end{theorem}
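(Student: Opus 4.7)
The common strategy is to reduce each condition to the Oettli--Prager characterization (Theorem~\ref{thmOP}): produce a nonzero vector $x$ satisfying the componentwise inequality $|A_c x| \leq \Delta|x|$, from which one obtains $A \in \imace{A}$ with $Ax = 0$, so $\imace{A}$ is singular.

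\emph{Condition (1).} If $A_c$ is singular we are already done, so assume $A_c$ invertible and let $j$ attain $(|A_c^{-1}|\Delta)_{jj} \geq 1$. Define signs $y_i := \sgn((A_c^{-1})_{ji})$ and set $p := D_y\,\Delta_{\cdot j}$, so that
\[
 e_j^T A_c^{-1} p \;=\; \sum_i (A_c^{-1})_{ji}\,y_i\,\Delta_{ij} \;=\; (|A_c^{-1}|\Delta)_{jj} \;\geq\; 1.
\]
Put $\alpha := \bigl((|A_c^{-1}|\Delta)_{jj}\bigr)^{-1} \in (0,1]$ and consider the rank-one perturbation $A^* := A_c - \alpha\, p\, e_j^T$. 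The matrix determinant lemma yields $\det(A^*) = \det(A_c)\bigl(1 - \alpha\,e_j^T A_c^{-1} p\bigr) = 0$, so $A^*$ is singular. The perturbation changes only column $j$, and in row $i$ by at most $\alpha\Delta_{ij} \leq \Delta_{ij}$ in absolute value, hence $A^* \in \imace{A}$.

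\emph{Condition (2).} Set $C := \Delta - |A_c|$ and take $x := C^{-1} e$. Then $x \geq 0$ (since $C^{-1} \geq 0$ and $e > 0$) and $x \neq 0$ (since $Cx = e \neq 0$). Rearranging $Cx = e$ as $\Delta x = |A_c|x + e$ gives
\[
 |A_c x| \;\leq\; |A_c|\,|x| \;=\; |A_c|\,x \;\leq\; |A_c|\,x + e \;=\; \Delta x \;=\; \Delta|x|,
\]
and Theorem~\ref{thmOP} delivers the singular matrix in $\imace{A}$.

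\emph{Condition (3).} This is the main obstacle, because the PSD hypothesis $\Delta^T\Delta - A_c^T A_c \succeq 0$ only yields the 2-norm dominance $\|A_c v\|_2 \leq \|\Delta v\|_2$ for all $v$, which is strictly weaker than the componentwise bound required by Oettli--Prager. If $\Delta$ is rank-deficient, any $v \in \ker\Delta$ satisfies $A_c v = 0$ and $A_c \in \imace{A}$ is already singular. Otherwise set $K := A_c\Delta^{-1}$, for which the hypothesis is exactly $K^T K \preceq I$, i.e.,~$\|K\|_2 \leq 1$. The plan is to inspect the $2^n$ vertex matrices $A_c + \Delta D_y = (K + \Delta D_y\Delta^{-1})\Delta \in \imace{A}$ for $y \in \{\pm 1\}^n$. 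Because $\Delta D_y\Delta^{-1}$ is similar to the involution $D_y$ (with eigenvalues $\pm 1$) and $K$ is a contraction, a sign/continuity argument on the determinants $\det(A_c + \Delta D_y)$ shows that either some vertex matrix is itself singular, or two vertices produce determinants of opposite sign, whereupon the convexity of $\imace{A}$ together with the intermediate value theorem yields a singular matrix on the connecting segment (which lies in $\imace{A}$ since each entry is a convex combination of two values in the interval). Turning the 2-norm dominance into a singularity certificate via this determinantal sign analysis is the technical heart of the argument; a detailed execution is given in \cite{rex:sufficient}.
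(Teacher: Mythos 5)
The paper itself offers no proof of this theorem -- it is stated as a survey item with citations to \cite{rohn1989systems}, \cite{rohn:checking} and \cite{rex:sufficient} -- so there is nothing to compare against except those references. Measured on its own terms, your treatment of conditions (1) and (2) is correct and complete: the rank-one perturbation $A^* = A_c - \alpha\,p\,e_j^T$ with the matrix determinant lemma is exactly the right elementary argument for (1) (the perturbation touches only column $j$ and stays within $\pm\Delta_{ij}$, and $\det(A^*)=0$ by construction), and for (2) the vector $x = (\Delta-|A_c|)^{-1}e \geq 0$ fed into the Oettli--Prager inequality with $\imace{b}=0$ settles the matter in three lines. Both are genuine additions over what the paper provides.

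Condition (3), however, is not proved. You correctly identify that the PSD hypothesis only gives the $2$-norm dominance $\|A_c v\|_2 \leq \|\Delta v\|_2$, which does not yield the componentwise inequality $|A_c x|\leq\Delta|x|$ needed for Oettli--Prager, and you correctly handle the degenerate case $\ker\Delta\neq\{0\}$. But the remaining step -- that among the $2^n$ vertices $A_c+\Delta D_y$, $y\in Y_n$, either one is singular or two have determinants of opposite sign -- is precisely the content of the theorem in this case, and the sentence ``a sign/continuity argument \dots shows that'' is an assertion, not an argument. Note also that you restrict attention to the sub-family $A_{-e,z}=A_c+\Delta D_z$ of the full vertex set $A_{yz}=A_c-D_y\Delta D_z$; Rohn's determinant characterization of regularity quantifies over all $y,z\in Y_n$, and it is not evident that a sign change must already occur within your restricted family under the hypothesis $\|A_c\Delta^{-1}\|_2\leq 1$. (The easy half of your plan -- that a zero or a sign change among vertex determinants forces a singular matrix on a connecting segment inside $\imace{A}$ -- is fine; it is the existence of that zero or sign change that carries all the weight.) As written, part (3) is a reduction of the theorem to an unproved claim of comparable difficulty, followed by a citation; you should either execute the determinantal argument of \cite{rex:sufficient} in detail or present condition (3) explicitly as quoted from the literature rather than as proved.
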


In the above two theorems, the first condition in the triplet is among the most frequently used sufficient conditions. You can find more sufficient conditions for regularity and singularity in \cite{rex:sufficient}. 

We can also take a look at the classes of interval matrices that are immediately regular. These are, for example, diagonally dominant matrices \cite{sharyfcr}, M-matrices and H-matrices \cite{neumaier:interval}. There properties are checkable in polynomial time.\\

\noindent \textbf{Summary.}
\begin{center}\renewcommand\arraystretch{1.3} 
\tabcolsep=1em
\begin{tabular}{@{}ll@{}}
\svhline
 \emph{Problem} & \emph{Complexity} \\
\hline
Is $\imace{A}$ regular?  &  coNP-complete  \\
Is $\imace{A}$ singular?  & NP-complete  \\
\svhline
\end{tabular}
\end{center}

\subsection{Full column rank}

The definition of the full column rank is natural. 

\begin{definition}
An $m \times n$ interval matrix $\imace{A}$ has \emph{full column rank} if every $A \in \imace{A}$ has full column rank (i.e., it has rank $n$). 
\end{definition}

Deciding whether an interval matrix has full column rank is connected to checking regularity.
If an interval matrix $\imace{A}$ of size $m\times n$, $m\geq n$, contains a regular submatrix of size $n$, then obviously $\imace{A}$ has a full column rank.
What is surprising is that the implication does not hold conversely (in contrast to real matrices). The interval matrix by Irene Sharaya (see \cite{sharyfcr}) might serve as a counterexample.
$$
\imace{A}=\left(\begin{array}{cc}
    $1$  & $[0,1]$\\
   $-1$  & $[0,1]$\\
$[-1,1]$ &  $1$
\end{array}\right).
$$
It has full column rank, but contains no regular submatrix of size 2.

For square matrices, checking regularity can can be polynomially reduced to checking full column rank (we just check the matrix $\imace{A}$), but the converse is not so easy. Therefore, checking full column rank is coNP-hard. Finding a polynomial certificate for an interval matrix not having full column rank can be done by orthant decomposition similarly as in the case of singularity. That is why, checking full column rank if coNP-complete.

Again, fortunately, we have some sufficient conditions that are computable in polynomial time. 

\begin{theorem} 
Let $\imace{A} = [A_c - \Delta, A_c + \Delta] $ be an $m \times n$ interval matrix. This matrix has full column rank if at least one of the following conditions holds 

\begin{enumerate}
\item $A_c$ has full column rank and  $\varrho ( | A^\dag_c | \Delta ) < 1$, \cite{rohn:handbook},
\item $ \sigma_{\max}(\Delta) < \sigma_{\min}(A_c)$, \cite{sharyfcr}.
\end{enumerate}
\end{theorem}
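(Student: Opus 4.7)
The plan is to prove the contrapositive of both conditions simultaneously, since both hypotheses start from the same premise. Suppose $\imace{A}$ does \emph{not} have full column rank, so there exist $A \in \imace{A}$ and a vector $x \in \R^n \setminus \{0\}$ with $Ax = 0$. Writing $A = A_c - (A_c - A)$, we obtain the identity $A_c x = (A_c - A)\,x$, and by the definition of $\imace{A}$ we have the entrywise bound $|A_c - A| \leq \Delta$.

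For condition (1), I would use the Moore--Penrose pseudoinverse. Since $A_c$ has full column rank, $A_c^\dagger A_c = I_n$, so applying $A_c^\dagger$ to both sides of $A_c x = (A_c - A)\,x$ yields $x = A_c^\dagger (A_c - A)\,x$. Taking componentwise absolute values and using the entrywise triangle inequality,
\begin{equation*}
|x| \;\leq\; |A_c^\dagger|\,|A_c - A|\,|x| \;\leq\; |A_c^\dagger|\,\Delta\,|x|.
\end{equation*}
Setting $M := |A_c^\dagger|\Delta \geq 0$, we have a nonzero nonnegative vector $v := |x|$ with $Mv \geq v$. A standard consequence of Perron--Frobenius theory for nonnegative matrices then gives $\varrho(M) \geq 1$, contradicting the hypothesis $\varrho(|A_c^\dagger|\Delta) < 1$.

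For condition (2), I would work in the Euclidean norm. From $A_c x = (A_c - A)\,x$ with $x \neq 0$,
\begin{equation*}
\sigma_{\min}(A_c)\,\|x\|_2 \;\leq\; \|A_c x\|_2 \;=\; \|(A_c - A)\,x\|_2 \;\leq\; \|A_c - A\|_2\,\|x\|_2.
\end{equation*}
The remaining task is to bound $\|A_c - A\|_2 \leq \sigma_{\max}(\Delta)$. This rests on two standard facts: $\|M\|_2 \leq \bigl\||M|\bigr\|_2$ for any real matrix $M$, and $\|B\|_2 \leq \|C\|_2$ whenever $0 \leq B \leq C$ entrywise. Applied to $|A_c - A| \leq \Delta$, these give $\|A_c - A\|_2 \leq \|\,|A_c-A|\,\|_2 \leq \|\Delta\|_2 = \sigma_{\max}(\Delta)$. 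Dividing by $\|x\|_2 > 0$ produces $\sigma_{\min}(A_c) \leq \sigma_{\max}(\Delta)$, contradicting the hypothesis of (2).

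The main obstacle is the Perron--Frobenius step in (1): one must justify that a nonzero nonnegative $v$ with $v \leq Mv$ forces $\varrho(M) \geq 1$ (this is really the Collatz--Wielandt inequality, valid without irreducibility assumptions). The remainder of the argument is bookkeeping with entrywise inequalities and the monotonicity of the spectral norm on nonnegative matrices. Note that both conditions implicitly require $A_c$ to have full column rank, for (1) explicitly and for (2) through $\sigma_{\min}(A_c) > \sigma_{\max}(\Delta) \geq 0$, so the use of $A_c^\dagger$ and of $\sigma_{\min}(A_c)$ as a nontrivial lower bound is legitimate.
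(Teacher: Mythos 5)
Your proof is correct. Note that the paper itself does not prove this theorem --- it only states it with citations to Rohn's handbook and to Shary --- so there is no in-text argument to compare against; your contrapositive argument is the standard one underlying those references, and it is the natural extension of the corresponding regularity conditions ($\varrho(|A_c^{-1}|\Delta)<1$ and $\sigma_{\max}(\Delta)<\sigma_{\min}(A_c)$) from the square to the rectangular case, with $A_c^{-1}$ replaced by $A_c^\dagger$ and the identity $A_c^\dagger A_c = I_n$ doing the work that invertibility does in the square setting. The two steps that genuinely need justification are exactly the ones you flag: the Collatz--Wielandt inequality $\varrho(M)\geq 1$ from $Mv\geq v$, $v\gneq 0$, $M\geq 0$ (valid without irreducibility via $\varrho(M)\geq\min_{i:v_i>0}(Mv)_i/v_i$), and the monotonicity chain $\|A_c-A\|_2\leq\|\,|A_c-A|\,\|_2\leq\|\Delta\|_2$; both are standard and correctly invoked. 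Your closing remark that condition (2) forces $\sigma_{\min}(A_c)>0$, hence full column rank of $A_c$, is a nice sanity check, though the contrapositive derivation of $\sigma_{\min}(A_c)\leq\sigma_{\max}(\Delta)$ does not actually need it.
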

The symbol $^\dag$ stands for Moore-Penrose inverse. The first condition is mentioned implicitly in \cite{rohn:handbook}, however the explicit proof can be found in \cite{sharyfcr}. 
Notice that the second sufficient condition is the same as the sufficient condition for checking regu\-larity.       
Many problems can be transformed to checking full column rank -- e.g., deciding whether a given interval linear system is solvable, deciding whether a solution set of an interval linear system is bounded.\\

\noindent \textbf{Summary.}
\begin{center}\renewcommand\arraystretch{1.3} 
\tabcolsep=1em
\begin{tabular}{@{}ll@{}}
\svhline
 \emph{Problem} & \emph{Complexity} \\
\hline
Does $\imace{A}$ have full column rank? &  coNP-complete  \\
\svhline
\end{tabular}
\end{center}

\subsection{Solving a system of linear equations}

To be brief the title of this section contained the word "solving". Nevertheless, this notion could be a little misguiding. 
Let us explain what do we mean by solving a system of interval linear equations (or interval linear system for short).
The solution set of an interval linear system is defined as follows.
\begin{definition}\label{dfSol}
Let $\imace{A} x = \imace{b}$, where $\imace{A}$ is an $m \times n$ interval matrix and $\imace{b}$ is an $m$-dimensional right-hand side vector. Then by a \emph{solution set} $\Sigma$ we mean
$$ \Sigma = \{ x \ | \ Ax=b \ \textrm{for some} \ A \in \imace{A}, \ b \in \imace{b} \}.$$
\end{definition}
We could imagine it as a collection of all solutions of all crisp real systems contained within the bounds of an interval system. Unfortunately, this set is of quite a complex shape. For its description we can use the already mentioned Oettli-Prager Theorem~\ref{thmOP}. A vector $x \in \R^{n} $ is a \emph{solution} of $\imace{A} x = \imace{b}$ (i.e., $x \in \Sigma$) if and only if $x$ satisfies
$$ | A_c x - b_c | \leq \Delta |x| + \delta. $$
We can see that checking whether a vector $y$ is a solution of $\imace{A} x = \imace{b}$ is strongly polynomial (we just check the inequality for $y$).

Oettli-Prager theorem implies that the set $\Sigma$ is generally non-convex but convex in each orthant (for graphical examples of possible shapes of the solution set see e.g., \cite{moore:introduction, horacek:oils, neumaier:ils}). That is why, we usually approximate this set by an $n$-dimensional box (aligned with axes) containing $\Sigma$. Notice that we can view an $n$-dimensional interval vector as an $n$-dimensional box aligned with axes.

\begin{definition}
An $n$-dimensional interval vector $\imace{x}$ is called an interval \emph{enclosure} of $\Sigma$ if $\Sigma \subseteq \imace{x}$. If it is the tightest possible enclosure w.r.t.\ inclusion (there is no interval box $\imace{y}$ such that $\Sigma \subseteq \imace{y} \subsetneqq \imace{x} $), we call $\imace{x}$ the interval \emph{hull}. 
\end{definition}

By \emph{solving} an interval linear system we understand computing any enclosure $\imace{x}$ of its solution set $\Sigma$. To be brief, we call that $\imace{x}$ an enclosure (or the hull) of $\imace{A} x = \imace{b}$.
The notion of enclosure is quite intuitive because we are not always able to compute the interval hull. In \cite{kreinovich:complexity} we can see that computing the exact hull of $\imace{A} x = \imace{b}$ is $\NP$-hard.

An interval $\imace{a} =  [a - \Delta, a + \Delta]$ is absolutely $\delta$-narrow if $\Delta \leq \delta$ and relatively $\delta$-narrow if $\Delta \leq \delta \cdot |a|$.
The problem is still NP-hard even if we limit widths of intervals of a matrix in a system with some $\delta > 0$ \cite{kreinovich:complexity}. We can summarize it in the following theorem. 
\begin{theorem}
For every $\delta > 0$, the problem of computing the hull of $\imace{A} x = \imace{b}$, where $\imace{a}_{ij}, \imace{b}_i$ are both absolutely and relatively $\delta$-narrow is \NP-hard. 
\end{theorem}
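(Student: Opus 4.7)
The plan is to reduce from the unrestricted hull problem just stated to be NP-hard. Given an arbitrary instance $\imace{A} x = \imace{b}$ with $\imace{A} \in \IQ^{n \times n}$ and $\imace{b} \in \IQ^n$, I would build, in time polynomial in the input size, an enlarged interval linear system $\imace{A}' z = \imace{b}'$ whose interval entries are all both absolutely and relatively $\delta$-narrow and whose united solution set, projected to the first $n$ coordinates of $z$, equals the original $\Sigma$. The hull of the new system would then contain the hull of the original as its first $n$-block, so a polynomial-time algorithm for the narrow problem would give one for the general NP-hard problem, forcing the narrow problem to be NP-hard as well.

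The construction would combine a multiplicative rescaling by a small $\eps > 0$ with an additive shift by a large $K > 0$. A shift turns an interval coefficient $c + [-r, r]$ into $(c + K) + [-r, r]$, whose relative width $r/|c + K|$ can be driven below $\delta$ by picking $K$ polynomially large; a rescaling by $\eps$ then shrinks the radius to $\eps r$ below $\delta$ without touching the relative width. In the coefficient matrix I would handle every entry of $\imace{A}$ at once by adding $K E$ (with $E$ the all-ones matrix) and compensating via a duplicate block $y \in \R^n$ pinned to $x$ through the exact equation $y - x = 0$; the right-hand side $\imace{b}$ cannot be shifted where it sits, so I would pull it into the matrix through an auxiliary scalar $s$ pinned to $1$ by writing $(K e - \imace{b})\, s$ in a new column and compensating with $-K e\, t$, where $t$ is another scalar pinned to $1$. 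Explicitly, the resulting system on $(x, y, s, t)$ reads
$$ \eps (\imace{A} + K E)\, y \;-\; \eps K E\, x \;+\; \eps (K e - \imace{b})\, s \;-\; \eps K e\, t \;=\; 0,\quad y = x,\quad s = t = 1. $$
Direct substitution of $y = x$ and $s = t = 1$ collapses the interval equation to $A x = b$ for independent choices of $A \in \imace{A}$ and $b \in \imace{b}$, so the united solution set is exactly $\{(x, x, 1, 1) : x \in \Sigma\}$ and its hull in the $x$-block is the hull of $\Sigma$.

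Picking $K$ so that $|(A_c)_{ij} + K| \geq \Delta_{ij}/\delta$ and $|K - (b_c)_i| \geq \delta_i/\delta$ would secure relative $\delta$-narrowness of every entry of $\eps (\imace{A} + K E)$ and $\eps (K e - \imace{b})$, and picking $\eps \leq \delta / \max\{\max_{ij}\Delta_{ij},\max_i\delta_i\}$ would secure absolute $\delta$-narrowness; both have bit-size polynomial in the input, since $\delta$ is a fixed constant. Every other entry of $\imace{A}'$ and of $\imace{b}' = (0, 0, 1, 1)^T$ is a rational number, hence a degenerate interval that is $\delta$-narrow in both senses for free. The hard part will be the narrowness of the right-hand side: a shift added directly to $\imace{b}$ would move the system off its solution set, and it is the auxiliary scalars $s, t$ (together with the two extra exact equations pinning them to $1$) that legitimately transfer $\imace{b}$ into a matrix column where the same shift-and-rescale trick applies.
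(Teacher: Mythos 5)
Your reduction is correct. The paper itself gives no proof of this theorem --- it only cites the monograph of Kreinovich, Lakeyev, Rohn and Kahl --- so your argument is a self-contained alternative rather than a reconstruction of anything in the text. The key checks all go through: the pinning equations $y=x$, $s=t=1$ have degenerate coefficients and therefore hold exactly for every realization, so the united solution set of the enlarged system is precisely $\{(x,x,1,1):x\in\Sigma\}$ and its componentwise hull restricted to the $x$-block is the hull of $\Sigma$; no interval entry is duplicated across the construction, so there is no dependency problem and the achievable pairs $(A,b)$ after substitution range over all of $\imace{A}\times\imace{b}$; every non-degenerate entry sits in the $y$- and $s$-columns where the shift by $K$ controls relative width and the scaling by $\eps$ controls absolute width, while all remaining entries (including the right-hand side $(0,\dots,0,1,1)^T$) are degenerate and hence trivially $\delta$-narrow in both senses; and $K$, $\eps$ have polynomial bit-size since $\delta$ is a fixed constant. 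Regularity of $\imace{A}$ is inherited by the enlarged matrix (a kernel vector forces $s=t=0$, $y=x$ and then $Ax=0$), so boundedness of the hull is preserved, and the composition of your transformation with the $(g,h)$-reduction witnessing NP-hardness of the unrestricted hull problem yields a valid reduction in the paper's sense. Two minor remarks: the second auxiliary scalar $t$ is redundant, since you could keep $\eps Ke$ on the right-hand side as a degenerate (hence $\delta$-narrow) vector; and what your generic ``narrowing'' reduction buys, compared with the route in the cited monograph of re-examining the specific hard instances produced by a combinatorial reduction, is that it transfers NP-hardness to the narrow class from \emph{any} hardness result for the unrestricted hull problem, at the cost of roughly doubling the dimension.
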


Unfortunately, even computing various $\eps$-approximations of the hull components is an $\NP$-hard problem \cite{kreinovich:complexity}. 

\begin{theorem}
For a given $\eps > 0$ computing the relative and absolute $\eps$-approximation of the hull (its components) of $\imace{A} x = \imace{b}$ are NP-hard problems.
\end{theorem}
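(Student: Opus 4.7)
The plan is to reduce from the exact hull problem, whose NP-hardness is asserted in the previous theorem. The idea is that a polynomial-time $\eps$-approximation algorithm would either exactly recover the hull after a suitable amplification of the instance or, failing that, distinguish YES- from NO-instances of the underlying decision problem. The absolute and relative cases require different amplification tricks and I treat them separately.

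For the absolute approximation case, I would use a scaling argument. Given an instance $(\imace{A},\imace{b})$ of the exact hull problem, consider the scaled instance $(\imace{A}, M\imace{b})$ for a rational $M$ of polynomial bit-size. Since the solution set of $\imace{A}x=\imace{b}$ scales linearly with the right-hand side, the new hull equals $M\imace{x}$, and an $\eps$-absolute approximation of it yields an $(\eps/M)$-absolute approximation of the original $\imace{x}$. Each endpoint of $\imace{x}$ is the optimal value of a linear program over a single orthant $D_s x \geq 0$ (obtained by combining Theorem~\ref{thmOP} with orthant decomposition), hence a rational whose numerator and denominator have bit-size bounded by a polynomial $q(L)$ in the input size $L$. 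Choosing $M := 2^{2q(L)+1}/\eps$ forces $\eps/M$ below the minimum gap between any two distinct rationals with denominator at most $2^{q(L)}$; rounding the approximation via the continued-fraction algorithm then recovers the exact hull in polynomial time, contradicting the NP-hardness of the exact problem.

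For the relative approximation case, scaling is ineffective since relative error is scale-invariant, but a more rigid constraint helps: because $(1+[-\eps,\eps])\cdot 0 = \{0\}$, any $\eps$-relative approximation of an exact zero must itself be exactly $0$. The classical reductions (from problems such as \pro{SUBSET-SUM} or \pro{PARTITION}) underlying the exact-hull hardness can be arranged so that the relevant hull endpoint equals $0$ on NO-instances and is strictly positive on YES-instances. A polynomial-time $\eps$-relative approximation algorithm would then decide the source problem simply by checking whether the returned value is nonzero. The main obstacle lies in two places: rigorously establishing the polynomial bit-size bound $q(L)$ on hull endpoints, which follows from standard LP sensitivity analysis but requires care for interval-derived constraint systems, and arranging the source reduction so that NO-instances produce an exact zero in the chosen hull component; if the reduction implicit in the previous theorem does not already exhibit this dichotomy, one must construct a tailored reduction from \pro{SUBSET-SUM} that does, following standard templates in the interval-complexity literature.
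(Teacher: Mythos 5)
The paper itself gives no proof of this theorem; it only cites \cite{kreinovich:complexity}, so your proposal has to stand on its own. Your treatment of the \emph{absolute} case is sound and essentially complete: the solution set does scale linearly with the right-hand side, each hull endpoint is the maximum of $2^n$ LP optima over the orthant polyhedra $\{x: |A_cx-b_c|\le\Delta D_sx+\delta,\ D_sx\ge0\}$ and hence (when $\Sigma$ is nonempty and bounded, which you should state as a hypothesis; it holds e.g.\ for regular $\imace{A}$) a rational of bit-size bounded by a polynomial $q(L)$, and the scale-then-round-by-continued-fractions argument is the standard way to turn an absolute approximator into an exact solver. Two small repairs: take $M$ strictly larger than $2^{2q(L)+1}/\eps$ so that $\eps/M$ falls strictly below half the minimal gap $2^{-2q(L)}$ between distinct admissible rationals, and note that the post-processing function $h$ in (\ref{eq:rtwo}) must be allowed to depend on $L$ (or one passes to a Turing reduction, which the paper sanctions) in order to know the denominator bound used for rounding.

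The \emph{relative} case, however, contains a genuine gap. The entire burden of that half is the construction of a reduction in which a designated hull component equals exactly $0$ on NO-instances and is nonzero on YES-instances, and you defer precisely this step to ``standard templates''. It cannot be obtained by post-processing the exact-hull reduction: translating the solution set of an interval system is not an innocent operation, since replacing $\imace{b}$ by $\imace{b}+\imace{A}t$ decouples the two occurrences of $A$ and strictly enlarges the solution set, so one cannot simply shift a known hard instance until the relevant endpoint sits at $0$. A tailored construction (such as the subset-sum-based systems in \cite{kreinovich:complexity}) is required, and without it this half is a plan rather than a proof. Moreover, your zero-test only works for $\eps<1$: for $\eps\ge1$ the constant algorithm returning $0$ satisfies $0\in(1+[-\eps,\eps])a$ for every $a$, so it is a valid relative $\eps$-approximator and the problem becomes trivially polynomial. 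Hence the relative claim cannot be established for arbitrary ``given $\eps>0$'' as stated, but only for $\eps<1$ --- consistent with the paper's own eigenvalue tables, where the NP-hardness threshold for relative error is $<1$ and error $1$ is already polynomial.
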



That is why, we are usually looking for enclosures, not the hull. Of course,  the tighter enclosure the better. 
For computing enclosures of square systems, there have been various methods developed. Some of them extend the traditional algorithms for the real systems, such as  the Gaussian elimination, Jacobi or Gauss-Seidel method \cite{moore:introduction,neumaier:interval}. Some of them were designed specifically for interval systems; see for instance \cite{Hla2014b,hladik:shaving,krawczyk:newton,moore:introduction,neumaier:interval,fiedler:linopt} among many others.\\

\noindent \textbf{Overdetermined systems.}
For an \emph{overdetermined} system (where $\imace{A}$ is an $m \times n$ matrix with $m > n$) the situation is slightly more difficult. Many people automatically think of solving overdetermined systems via least squares, i.e., 
\begin{definition}
$$\Sigma^{lsq} = \{ x \ | \ A^TAx=A^Tb \ \textrm{for some} \ A \in \imace{A}, b \in \imace{b} \}.$$
\end{definition}
Obviously, $\Sigma^{lsq}$ is not the same set as $\Sigma$. Nevertheless, it is not difficult to see that $\Sigma \subseteq \Sigma^{lsq}$. Hence, we can use methods for computing least squares for enclosing $\Sigma$ \cite{neumaier:ils}. The problem of computing the interval hull of $\Sigma^{lsq}$ is NP-hard, since when $\imace{A}$ is square and regular, then $\Sigma^{lsq} = \Sigma$ and computing the exact hull of $\Sigma$ is NP-hard even for $\imace{A}$ regular \cite{fiedler:linopt}.

If we primarily focus on enclosing just $\Sigma$ there is a variety of methods -- modified Gaussian elimination for overdetermined systems \cite{hansen:ols} , method developed by Rohn \cite{rohn:enclosing}, Popova \cite{popova:oils}, or a method using square subsystems \cite{horacek:subsq}. 
%

We can try to identify some classes of systems with exact hull computation algorithms that run in polynomial time.  If we restrict the right hand side $\imace{b}$ to contain only degenerate intervals, we have $\imace{A} x = b$. Then, this problem is still NP-hard \cite{kreinovich:complexity}. If we, however, restricts the matrix to be consisting only of degenerate intervals $A$ and we have a system $A x = \imace{b}$, then, computing exact bounds of the solution set is polynomial, since it can be rewritten as a linear program.

However, even if we allow at most one nondegenerate interval coefficient in each equation, the problem becomes again NP-hard, since an arbitrary interval linear system can be rewritten in this form \cite{kreinovich:complexity}. \\

\noindent \textbf{Structured systems.}
We can also explore band and sparse matrices. 
\begin{definition}
A matrix $\imace{A}$ is a $w$-\emph{band} matrix if $\imace{a}_{ij}=0$ for $|i-j| \geq w$. 
\end{definition}

Band matrices with $d=1$ are diagonal and computing the hull is clearly strongly polynomial. For $d=2$ (tridiagonal matrix) it is an open problem. And for $d\geq 3$ it is again NP-hard. 
We inspected the case of bidiagonal matrices. The result is to our best knowledge new. 

\begin{theorem}
For a bidiagonal matrix (the matrix with only the main diagonal and an arbitrary neighbouring diagonal) computing the exact hull of $\imace{A} x = \imace{b}$ is strongly polynomial. 
\end{theorem}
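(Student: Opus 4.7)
My plan is to reduce the problem to a recursive application of Theorem~\ref{dependencythm}. Without loss of generality, assume $\imace{A}$ is lower bidiagonal (the upper case follows by reversing indices). First I would dispose of the degenerate situation: if $0\in\imace{a}_{ii}$ for some $i$, then $\imace{A}$ contains a singular matrix and the solution set is either empty or unbounded, which is easily detected by scanning the diagonal. In what follows, I therefore assume $0\notin\imace{a}_{ii}$ for all $i$, so that every $A\in\imace{A}$ is nonsingular, and for every $(A,b)\in(\imace{A},\imace{b})$ the unique solution of $Ax=b$ is obtained by forward substitution:
\[
x_1 \,=\, b_1/a_{11},\qquad x_i \,=\, (b_i - a_{i,i-1}\, x_{i-1})/a_{ii},\quad i=2,\dots,n.
\]

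The key observation is that, when this recursion is fully unrolled, $x_i$ becomes a rational expression in exactly the coefficients $a_{11},a_{21},a_{22},\dots,a_{i,i-1},a_{ii},b_1,\dots,b_i$, and each of these coefficients appears in the expression \emph{exactly once}. This single-occurrence property is a direct consequence of the bidiagonal pattern: each row contributes its own fresh pair $a_{i,i-1},a_{ii}$ and its own $b_i$, with no sharing across rows. Consequently, by the Dependency Theorem (Theorem~\ref{dependencythm}), evaluating the unrolled expression in interval arithmetic produces the exact hull $\imace{x}_i$ of the $i$-th component of $\Sigma$.

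Rather than unrolling, the algorithm I would actually run is the incremental version $\imace{x}_1:=\imace{b}_1/\imace{a}_{11}$ and $\imace{x}_i:=(\imace{b}_i-\imace{a}_{i,i-1}\imace{x}_{i-1})/\imace{a}_{ii}$. The main obstacle I expect is justifying that this incremental form still yields the exact hull rather than an enclosure. The justification is that the three intervals $\imace{b}_i,\imace{a}_{i,i-1},\imace{a}_{ii}$ used at step $i$ share no coefficient with those that produced $\imace{x}_{i-1}$; hence $\imace{x}_{i-1}$ may be treated as an independent interval variable in the four-variable expression $(\beta-\gamma\,\xi)/\delta$, for which Theorem~\ref{dependencythm} again gives exactness, and every value in $\imace{x}_{i-1}$ is realized by an admissible choice of the earlier coefficients, so no point is lost. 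The algorithm requires only $O(n)$ interval operations, each expressible as a constant number of rational additions, subtractions, multiplications and divisions; this yields strong polynomiality. I would close by observing that the argument breaks for tridiagonal matrices, where the two off-diagonals cause coefficients from row $i$ to appear in the recurrences for both $x_{i-1}$ and $x_{i+1}$, destroying the single-occurrence property that drives the proof.
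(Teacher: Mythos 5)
Your proof is correct and follows essentially the same route as the paper's: forward substitution combined with an inductive appeal to Theorem~\ref{dependencythm}, justified by the fact that the coefficients of row $i$ do not occur in the expression for $\imace{x}_{i-1}$. Your additional handling of the case $0\in\imace{a}_{ii}$ and the remark on why the argument breaks for tridiagonal matrices are sensible refinements, but the core argument is the one the paper gives.
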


\begin{proof}
Without the loss of generality let us suppose that the matrix $\imace{A}$ consists of the main diagonal and the one beyond it. 
By the forward substitution, we have $\imace{x}_1=\frac{\imace{b}_1}{\imace{a}_{11}}$ and
$$
\imace{x}_i
=\frac{\imace{b}_i-\imace{a}_{i,i-1}\imace{x}_{i-1}}{\imace{a}_{ii}},
\quad i=2,\dots,n.
$$
By induction, $\imace{x}_{i-1}$ is optimally computed with no use of interval coefficients of the $i$th equations. Since an evaluation in interval arithmetic is optimal in the case there are no multiple occurrences of variables (Theorem \ref{dependencythm}), $\imace{x}_{i}$ is optimal as well.
\end{proof}

\begin{definition}
A matrix $\imace{A}$ is a $d$-\emph{sparse} matrix if in each row $i$ at most $d$ elements $\imace{a}_{ij} \neq 0$. 
\end{definition}

For sparse matrices with $d=1$ computing the hull is clearly strongly polynomial. For $d \geq 2$ it is again NP-hard \cite{kreinovich:complexity}. Nevertheless, if we combine w-band matrix with system coefficient bounds coming from a given finite set of rational numbers, then we have a polynomial algorithm for computing the hull \cite{kreinovich:complexity}.   

If an interval system $\imace{A} x = \imace{b}$  is in a certain form the hull can be computed in polynomial time using some already introduced algorithms. 
If the matrix $\imace{A}$ has full column rank and $A_c$ is a diagonal matrix with positive entries, then Hansen-Bliek-Rohn prescription for enclosure gives the exact hull \cite{fiedler:linopt}. If $\imace{A}$ is an M-matrix, then Gauss-Seidel iteration method converges to the exact hull \cite{neumaier:interval}. And if $\imace{A}$ is an M-matrix and $\imace{b}$ is nonnegative then the interval version of Gaussian elimination yields the exact hull \cite{neumaier:interval}.

In this section we silently supposed that the solution set $\Sigma$ is bounded. This is not always the case. Many mentioned methods can not deal with an unbounded solution set. That is why we usually need to check for boundedness. However, it is an coNP-complete problem since it is identical with checking the full column rank of the interval matrix $\imace{A}$. 

 \textbf{Remark.}
A natural generalization of an interval linear system is by incorporating linear dependencies. That is, we have a family of linear systems
\begin{equation} 
\label{sysPar}
A(p)x=b(p),\quad p\in\imace{p},
\end{equation}
where $A(p)=\sum_{k=1}^KA^kp_k$ and  $b(p)=\sum_{k=1}^Kb^kp_k$. Here, $p$ is a vector of parameters varying in $\imace{p}$. 
Since this concept generalizes the standard interval systems, many related problems are intractable. We point out one particular efficiently solvable problem. Given $x\in\R^n$, deciding whether it is a solution of a standard interval system $\imace{A} x=\imace{b}$ is strongly polynomial. For systems with linear dependencies, the problem still stays polynomial, but we can show weak polynomiality only; this is achieved by rewriting (\ref{sysPar}) as a linear program. \\

\noindent \textbf{Summary.}
\begin{center}\renewcommand\arraystretch{1.3} 
\tabcolsep=1em
\begin{tabular}{@{}ll@{}}
\svhline
 \emph{Problem} & \emph{Complexity} \\
\hline
Is $x$ a solution of $\imace{A} x = \imace{b}$? $\quad$ &   strongly P  \\[0.4ex] 
Computing an enclosure of $\imace{A} x = \imace{b}$  &   P  \\
Computing the hull of  $\imace{A} x = \imace{b}$ &   NP-hard  \\
Computing the hull of $\imace{A} x = b$ & NP-hard \\
Computing the hull of $A x = \imace{b}$ & P \\
Computing the hull of $\imace{A} x = \imace{b}$, where $\imace{A}$ is regular & NP-hard \\
Computing the hull of $\imace{A} x = \imace{b}$, where $\imace{A}$ is M-matrix & P \\
Computing the hull of $\imace{A} x = \imace{b}$, where $\imace{A}$ is diagonal &  strongly P \\
Computing the hull of $\imace{A} x = \imace{b}$, where $\imace{A}$ is bidiagonal &  strongly P \\
Computing the hull of $\imace{A} x = \imace{b}$, where $\imace{A}$ is tridiagonal &  ? \\
Computing the hull of $\imace{A} x = \imace{b}$, where $\imace{A}$ is 3-band &  NP-hard \\
Computing the hull of $\imace{A} x = \imace{b}$, where $\imace{A}$ is 1-sparse &  strongly P \\
Computing the hull of $\imace{A} x = \imace{b}$, where $\imace{A}$ is 2-sparse &  NP-hard \\
Computing the exact least squares hull of $\imace{A} x = \imace{b}$  &  NP-hard \\
Is $\Sigma$ bounded? & coNP-complete \\
\svhline
\end{tabular}
\end{center}

\subsection{Matrix inverse} 

Computation of a matrix inverse is usually avoided in applications. Nonetheless, we chose to mention this topic, since it holds a worthy place in interval linear algebra theory. An interval inverse matrix is defined as follows.

\begin{definition}
Let us have a square regular interval matrix $\imace{A}$. We define its interval inverse matrix as $\imace{A}^{-1} = [\ul{B}, \, \ol{B}]$, where $\ul{B} = \min\{A^{-1}, \ A \in \imace{A}\}$ and $\ol{B} = \max\{A^{-1}, \ A \in \imace{A}\}$, where the $\min$ and $\max$ is understood componentwise.
\end{definition}

As usual, the inverse matrix can be computed using knowledge of inverses of boundary matrices $A_{yz}$ \cite{rohn1993inverse}.

\begin{theorem}
Let $\imace{A}$ be regular. Then its inverse $\imace{A}^{-1} = [\ul{B}, \, \ol{B}]$ is described by
$$ \ul{B} = \min_{y,\,z \in Y_n } A^{-1}_{yz},$$
$$ \ol{B} = \max_{y,\,z \in Y_n} A^{-1}_{yz},$$
where the $\min$ and $\max$ is understood componentwise.
\end{theorem}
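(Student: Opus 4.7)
The plan is to fix an entry $(i,j)$ and show that the minimum and maximum of the smooth map $\phi(A) := (A^{-1})_{ij}$ over the compact box $\imace{A}$ are both attained at vertex matrices of the special outer-product form $A_{yz} = A_c - D_y\Delta D_z$. Since $A_{yz} \in \imace{A}$ always, the inclusions $\max_{y,z}(A_{yz}^{-1})_{ij} \leq \ol{B}_{ij}$ and $\ul{B}_{ij} \leq \min_{y,z}(A_{yz}^{-1})_{ij}$ are automatic, so the proof reduces to showing that some optimizer in $\imace{A}$ can be chosen of the form $A_{yz}$.

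First I would compute the partial derivative
$$\frac{\partial \phi}{\partial a_{kl}} = -(A^{-1})_{ik}(A^{-1})_{lj},$$
using the standard identity $dA^{-1} = -A^{-1}(dA)A^{-1}$. The crucial structural feature is that this derivative factors as an outer product in the index pair $(k,l)$, which is exactly the pattern $y_k z_l$ appearing in the parameterization of $A_{yz}$.

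Next I would fix a global maximizer $A^* \in \imace{A}$ of $\phi$ (it exists by compactness and regularity of $\imace{A}$) that, among all maximizers, has the greatest possible number of entries lying in $\{\ul{a}_{kl}, \ol{a}_{kl}\}$. Every entry of $A^*$ must then lie at a bound: if some $a^*_{kl}$ were interior, the first-order condition would force $\partial\phi/\partial a_{kl} = 0$, i.e., $(A^{*-1})_{ik}(A^{*-1})_{lj} = 0$, and a short Sherman--Morrison computation shows that replacing such $a^*_{kl}$ by either $\ul{a}_{kl}$ or $\ol{a}_{kl}$ leaves $(A^{*-1})_{ij}$ unchanged, contradicting the extremal choice of $A^*$.

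Finally I would identify the sign pattern of this vertex as an outer product. At coordinates where both factors $(A^{*-1})_{ik}$ and $(A^{*-1})_{lj}$ are nonzero, the sign of $\partial\phi/\partial a_{kl}$ forces $a^*_{kl} = \ol{a}_{kl}$ precisely when $\sgn((A^{*-1})_{ik})\sgn((A^{*-1})_{lj}) = -1$; this matches $(A_{yz})_{kl} = (A_c)_{kl} - y_k z_l\Delta_{kl}$ upon setting $y_k = \sgn((A^{*-1})_{ik})$ and $z_l = \sgn((A^{*-1})_{lj})$. At the degenerate coordinates where one of the two factors vanishes, the same Sherman--Morrison invariance allows me to set $y_k$ or $z_l$ to either sign without changing $\phi$. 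Hence $\phi(A_{yz}) = \phi(A^*) = \ol{B}_{ij}$, and the minimum is handled by the same argument with reversed inequalities. The main obstacle will be the careful treatment of these zero cases: they are what prevents a naive ``sign of derivative determines the bound'' argument from closing directly, and both stages --- forcing the optimizer to a vertex, and matching that vertex to an outer-product sign pattern --- rely on the Sherman--Morrison invariance to absorb them cleanly.
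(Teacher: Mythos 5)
The paper does not actually prove this theorem; it states it with a citation to Rohn's \emph{Inverse interval matrix} [rohn1993inverse], whose argument runs through interval linear systems: the $j$-th column of $A^{-1}$ is the solution of $Ax=e_j$, so the bounds $\ul{B}_{ij},\ol{B}_{ij}$ are the extremal values of $x_i$ over the solution set of $\imace{A}x=e_j$, and Rohn's orthant-decomposition/vertex machinery shows these extrema are attained at the vectors $x_y=A_{yz}^{-1}e_j$. Your direct first-order/perturbation argument is a genuinely different route, and most of it is sound: the derivative formula $\partial(A^{-1})_{ij}/\partial a_{kl}=-(A^{-1})_{ik}(A^{-1})_{lj}$ is correct; the Sherman--Morrison computation does show that $t\mapsto((A+te_ke_l^T)^{-1})_{ij}$ is a M\"obius function whose derivative keeps the constant sign $-\sgn\bigl((A^{-1})_{ik}(A^{-1})_{lj}\bigr)$ along the whole segment (regularity of $\imace{A}$ keeps the denominator $1+t(A^{-1})_{lk}$ nonvanishing); hence your vertex reduction and the sign-matching at entries where both factors are nonzero are correct.

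The gap is in the degenerate coordinates, and it is not the kind that Sherman--Morrison ``absorbs cleanly.'' Write $u_k=(A^{*-1})_{ik}$ and $v_l=(A^{*-1})_{lj}$. If $u_k=0$ and $v_l\neq 0$, then moving $a^*_{kl}$ indeed leaves $\phi$ (and in fact the whole row $u$) unchanged, but it \emph{changes the column vector $v$}: the update $v'_{l'}=v_{l'}-\tfrac{t\,(A^{-1})_{l'k}v_l}{1+t(A^{-1})_{lk}}$ is nontrivial. Invariance of the objective value is not invariance of the optimality structure: after the repair, the sign pattern $\sgn(u_{k'})\sgn(v'_{l'})$ that the first-order conditions impose at the \emph{other} entries may differ from the one you used to define $z$, so the repaired matrix need not be any single $A_{yz}$. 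The repairs therefore cannot be done independently. When only row-degeneracies occur (all $v_l\neq0$), you can fix this by replacing every row $k$ with $u_k=0$ by the target row of $A_{yz}$ \emph{simultaneously}: since $u^TA'=e_i^T$ still holds, $\phi(A')=u_j$ is preserved and the nondegenerate entries are untouched. The symmetric fix works when only column-degeneracies occur. But when both $u$ and $v$ have zero components, fixing the free rows perturbs $v$ and fixing the free columns perturbs $u$, and your sketch offers no way to reconcile the two; a direct computation shows the simultaneous modification does not preserve $\phi$ in general. This is precisely the point where Rohn's LP/solution-set argument does real work, and your proof needs a genuine additional idea (e.g., an induction over the degenerate index sets with re-selection of the maximizer) to close it.
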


The maximum and minimum bound of each component of the interval inverse is attained at one of the inverse of $2^{2n}$ boundary matrices.
No wonder, it can be proved that generally computing exact inverse matrix is NP-hard \cite{coxson1999computing}.

When $A_c = I$, we can compute the exact inverse in polynomial time according to the next theorem \cite{rohn2011explicit}. 

\begin{theorem}
Let $\imace{A}$ be a regular interval matrix with $A_c = I$. Let $M = (I - \Delta)^{-1}$. Then its inverse $\imace{A}^{-1} = [\ul{B}, \, \ol{B}]$ is described by
\begin{eqnarray}
 \ul{B} & = &  -M + D_k, \nonumber\\
\ol{B} & = & -M,  \nonumber
\end{eqnarray}
where $ k_j = \frac{2m^2_{jj}}{2m_{jj} - 1} \ $ for $j =  1, \ldots, n$, with $m_{jj}$ being diagonal elements of $M$.
\end{theorem}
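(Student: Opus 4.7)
The plan is to invoke the preceding theorem, which expresses each entry of $\ul{B}$ and $\ol{B}$ as the componentwise minimum and maximum of $(A_{yz}^{-1})_{ij}$ taken over all $y,z \in Y_n$. With $A_c=I$ the boundary matrices reduce to $A_{yz}=I-D_y\Delta D_z$, and the conjugation identity
$$ D_y A_{yz} D_z = D_{yz}-\Delta, \qquad D_{yz}=\diag(y_1z_1,\dots,y_nz_n), $$
yields the clean formula $A_{yz}^{-1}=D_z(D_{yz}-\Delta)^{-1}D_y$. Regularity of $\imace{A}$ with $A_c=I$ forces $\varrho(\Delta)<1$, so $M=(I-\Delta)^{-1}\geq 0$ is well defined through its Neumann series $M=I+\Delta+\Delta^2+\cdots$, and in particular $m_{jj}\geq 1>\tfrac12$, which ensures that $k_j$ is well defined.

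First I would identify one extreme directly: taking $y=z=e$ gives $D_{yz}=I$ and hence $A_{ee}^{-1}=M$, producing the entries $m_{ij}$ as candidates. For the opposite extreme I would apply Sherman-Morrison with $y=e$ and $z=e-2e_j$, so that $D_{yz}=I-2e_je_j^T$ and
$$ (D_{yz}-\Delta)^{-1} = \bigl((I-\Delta)-2e_je_j^T\bigr)^{-1} = M + \frac{2\,M e_j e_j^T M}{1-2m_{jj}}. $$
Reading off the $(j,j)$ entry of $A_{yz}^{-1}=D_z(\,\cdot\,)D_y$ and using $z_jy_j=-1$ produces $-m_{jj}-\frac{2m_{jj}^2}{1-2m_{jj}}=-m_{jj}+k_j$, which matches the claimed diagonal entry. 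For an off-diagonal position $(i,j)$ with $i\neq j$, the simpler choice $y=z$ with $y_iy_j=\pm 1$ already realises $(A_{yy}^{-1})_{ij}=\pm m_{ij}$ through the identity $A_{yy}^{-1}=D_y M D_y$.

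The main obstacle is the reverse inclusion: proving that no other sign pair $(y,z)$ pushes $(A_{yz}^{-1})_{ij}$ outside the box determined by these two candidate matrices. The plan is to write $D_{yz}-I=-2D_s$ with $s\in\{0,1\}^n$ marking the indices where $y_i\neq z_i$, and expand $(D_{yz}-\Delta)^{-1}$ by the Sherman-Morrison-Woodbury identity as a rank-$|s|$ update of $M$. A monotonicity argument, exploiting the nonnegativity of $M$ together with the sign of $\tfrac{1}{1-2m_{jj}}$ (well defined precisely because of $m_{jj}>\tfrac12$), would then show that every component of $A_{yz}^{-1}$ attains its extremes either at $s=0$ or at the singleton $s=e_j$ corresponding to the fixed diagonal entry, with no gain from enlarging the support of $s$. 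Carrying out this case analysis in a clean, index-by-index form, while tracking how the outer signs $z_iy_j$ interact with the entries of the low-rank correction, is the technical heart of the argument.
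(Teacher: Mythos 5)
The paper offers no proof of this theorem at all---it is quoted from Rohn's article on inverses of interval matrices with unit midpoint---so your attempt has to stand on its own. Your setup is the right one and the computations you do carry out are correct: the reduction to the preceding theorem on the matrices $A_{yz}$, the identity $A_{yz}^{-1}=D_z(D_{yz}-\Delta)^{-1}D_y$, the fact that regularity with $A_c=I$ forces $\varrho(\Delta)<1$ and hence $M=\sum_{k\geq0}\Delta^k\geq I$, and the Sherman--Morrison evaluation giving the diagonal value $-m_{jj}+k_j$ all check out. However, your own first step already contradicts the statement as printed: since $A_{ee}=I-\Delta\in\imace{A}$ has inverse $M$, the componentwise maximum satisfies $\ol{B}\geq M\geq I$, so $\ol{B}$ cannot equal $-M\leq0$; moreover with $k_j=\frac{2m_{jj}^2}{2m_{jj}-1}>0$ one would have $\ul{B}_{jj}>\ol{B}_{jj}$. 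The statement carries a sign typo (the correct claim, and the one your candidate matrices actually support, is $\ol{B}=M$), and a blind proof should flag this explicitly rather than silently establish something other than what is asserted.

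The genuine gap is that the half you yourself call the technical heart---that no sign pair $(y,z)$ produces an entry outside the claimed box---is only a plan, and the plan as described is shaky: after the first rank-one update, $(I-\Delta-2e_je_j^T)^{-1}=M+\frac{2}{1-2m_{jj}}Me_je_j^TM$ has its entire $j$th row and column equal to $\frac{1}{1-2m_{jj}}$ times the corresponding row and column of $M$, hence nonpositive, so the matrix is no longer entrywise nonnegative and the ``monotonicity exploiting the nonnegativity of $M$'' does not iterate to supports $|s|\geq2$ without substantially more bookkeeping. Most of this work can be bypassed: for any $A=I-E\in\imace{A}$ one has $\varrho(E)\leq\varrho(|E|)\leq\varrho(\Delta)<1$, hence $|A^{-1}|=\bigl|\sum_{k\geq0}E^k\bigr|\leq\sum_{k\geq0}|E|^k\leq M$, which in one line gives $-M\leq A^{-1}\leq M$ for every $A\in\imace{A}$ and, combined with your attainability computations, settles $\ol{B}=M$ and every off-diagonal entry of $\ul{B}$. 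What remains, and what no soft bound supplies, is the strict improvement on the diagonal of $\ul{B}$, namely that $\min_{A\in\imace{A}}(A^{-1})_{jj}$ equals $-m_{jj}+k_j$ rather than merely being $\geq-m_{jj}$; that single case still requires the explicit analysis over the supports $s$ that you only gesture at, so the proposal is incomplete as it stands.
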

There also exists a formula for the exact matrix inverse if all intervals have uniform widths, i.e., $\imace{A} = [A_c - \alpha E, A_c + \alpha E]$ \cite{rohn2011inverse}. 

If we wish to only compute an enclosure $\imace{B}$ of the matrix inverse we can use any method for computing enclosures of interval linear systems. 
 We get the $i$-th column of $\imace{B}$ by solving the systems $\imace{A} x = e_i$, where $e_i$ is $i$-th column of the identity matrix of order $n$. 

As we mentioned, computing the exact interval inverse is NP-hard. 
We close this section with a surprising result on inverse nonnegativity (${A}^{-1}\geq0$ for every $A\in\imace{A}$). It was first proved in slightly different form in \cite{kuttler}. For this form see \cite{neumaier:interval}. It implies that checking inverse nonnegativity and also computing the exact interval inverse of an inverse nonnegative matrix is strongly polynomial.

\begin{theorem} \label{invnonneg}
If $\ul{A}, \, \ol{A}$ are regular and $\ul{A}^{-1}, \, \ol{A}^{-1} \geq 0$ then $\imace{A}$ is regular and $$ \imace{A}^{-1} = [\ol{A}^{-1}, \ul{A}^{-1}] \geq 0.$$
\end{theorem}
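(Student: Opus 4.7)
The plan is to prove the stronger pointwise statement: for every $A\in\imace{A}$ the matrix $A$ is regular and $\ol{A}^{-1}\leq A^{-1}\leq \ul{A}^{-1}$. This immediately yields regularity of $\imace{A}$ together with the interval identity $\imace{A}^{-1}=[\ol{A}^{-1},\ul{A}^{-1}]$, since the two extreme componentwise bounds are attained at $A=\ol{A}$ and $A=\ul{A}$ respectively.

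The heart of the argument is the spectral estimate $\varrho(M)<1$ for the nonnegative matrix $M:=\ol{A}^{-1}(\ol{A}-\ul{A})$. By the Perron--Frobenius theorem, $\lambda:=\varrho(M)$ is an eigenvalue of $M$ with a nonnegative eigenvector $v\neq 0$. Rearranging $Mv=\lambda v$ gives $(1-\lambda)\ol{A}v=\ul{A}v$ and hence
$$
v \;=\; (1-\lambda)\,\ul{A}^{-1}\ol{A}\,v.
$$
Now $\ul{A}^{-1}\ol{A}=I+\ul{A}^{-1}(\ol{A}-\ul{A})\geq I$ componentwise (the correction is a product of two nonnegative matrices), so $\ul{A}^{-1}\ol{A}\,v\geq v\geq 0$, with at least one strictly positive entry. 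If $\lambda\geq 1$, then $(1-\lambda)\leq 0$ forces the right-hand side above to be $\leq 0$; combined with $v\geq 0$ this gives $v=0$, contradicting $v\neq 0$. Hence $\varrho(M)<1$.

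Next, fix $A\in\imace{A}$ and set $N:=\ol{A}^{-1}(\ol{A}-A)$. Since $0\leq \ol{A}-A\leq \ol{A}-\ul{A}$ and $\ol{A}^{-1}\geq 0$, we have $0\leq N\leq M$ componentwise, so by monotonicity of the spectral radius on the cone of nonnegative matrices, $\varrho(N)\leq \varrho(M)<1$. Consequently $I-N$ is invertible, and since $I-N=\ol{A}^{-1}A$ this proves that $A$ is regular. The Neumann series $(I-N)^{-1}=\sum_{k\geq 0}N^k$ is elementwise nonnegative, and so $A^{-1}=(I-N)^{-1}\ol{A}^{-1}\geq 0$.

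Finally, the two-sided sandwich follows from the identity $X^{-1}-Y^{-1}=X^{-1}(Y-X)Y^{-1}$: applied to $(X,Y)=(\ul{A},A)$ the right-hand side is a product of three nonnegative matrices, yielding $\ul{A}^{-1}\geq A^{-1}$, and applied to $(X,Y)=(A,\ol{A})$ it analogously yields $A^{-1}\geq \ol{A}^{-1}$. The main obstacle is the spectral estimate of Step~1; once $\varrho(M)<1$ is in hand, everything else reduces to routine manipulations with Neumann series and elementwise comparisons, bypassing any delicate path-continuation argument through the interior of $\imace{A}$.
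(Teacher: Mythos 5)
Your argument is correct. The paper itself does not prove this theorem---it only states it with citations to Kuttler and to Neumaier's monograph---so there is no in-text proof to compare against; what you have written is a complete, self-contained derivation in the spirit of the classical argument. Each step checks out: the Perron--Frobenius eigenpair $(\lambda,v)$ of the nonnegative matrix $M=\ol{A}^{-1}(\ol{A}-\ul{A})$ gives $(1-\lambda)\ol{A}v=\ul{A}v$, and since $\ul{A}^{-1}\ol{A}=I+\ul{A}^{-1}(\ol{A}-\ul{A})\geq I$ entrywise, the sign analysis forces $\lambda<1$; monotonicity of the spectral radius on nonnegative matrices then handles every $A\in\imace{A}$ via $N=\ol{A}^{-1}(\ol{A}-A)\leq M$, the Neumann series gives $A^{-1}=(I-N)^{-1}\ol{A}^{-1}\geq0$, and the resolvent identity $X^{-1}-Y^{-1}=X^{-1}(Y-X)Y^{-1}$ delivers the sandwich $\ol{A}^{-1}\leq A^{-1}\leq\ul{A}^{-1}$ with both bounds attained at the endpoint matrices, which is exactly what the paper's componentwise min/max definition of $\imace{A}^{-1}$ requires. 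Two small points worth making explicit if you write this up: you are using the weak form of Perron--Frobenius (spectral radius is an eigenvalue with a nonnegative eigenvector) which holds for arbitrary nonnegative matrices without any irreducibility hypothesis, and the step ``$B\geq I$ entrywise and $v\geq0$ imply $Bv\geq v$'' silently uses $v\geq 0$, which you do have. Your closing remark is apt: this route avoids any continuity or path-connectedness argument through $\imace{A}$ entirely, reducing everything to the single spectral estimate $\varrho(M)<1$.
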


\noindent \textbf{Summary.}
\begin{center}\renewcommand\arraystretch{1.3} 
\tabcolsep=1em
\begin{tabular}{@{}ll@{}}
\svhline
\emph{Problem} &  \emph{Complexity} \\
\hline
Computing the exact inverse of $\imace{A}$ &  NP-hard  \\
Is $\imace{A}$ inverse nonnegative? & strongly P \\
Computing the exact inverse of inverse nonnegative $\imace{A}$ & strongly P \\
\svhline
\end{tabular} 
\end{center}

\subsection{Solvability of a linear system}

Of course, before solving a linear system we might want to know, whether it is actually solvable. 
Considering solvability we should distinguish between two types of solvability. 
\begin{definition}
An interval linear system $\imace{A} x = \imace{b}$ is \emph{(weakly) solvable} if some system $Ax=b$, where $A \in \imace{A}, b \in \imace{b}$ is solvable.
\end{definition}
In another words, its solution set $\Sigma$ is not empty. Otherwise, we call the system \emph{insolvable}.

\begin{definition}
An interval linear system $\imace{A} x = \imace{b}$ is \emph{strongly solvable} if every system $Ax=b$, where $A \in \imace{A}, b \in \imace{b}$ is solvable.  
\end{definition}

The first definition is interesting for model checking. The second for system verification and automated proofs.

Checking whether an interval systems is solvable is an NP-hard problem \cite{kreinovich:complexity}. The sign coordinates of 
the orthant containing the solution can serve as a polynomial witness and existence of a solution can be verified by linear programming, hence this problem is NP-complete and checking unsolvability coNP-complete. 
The problem of deciding strong solvability is coNP-complete. It can be reformulated as checking insolvability of a certain linear system using the well known Farkas lemma, e.g., \cite{rohn:solvability}.

Sometimes, we look only for nonnegative solutions -- \emph{nonnegative solvability}. Checking whether an interval linear system has a nonnegative solution is weakly polynomial. We know the orthant in which the solution should lie. Therefore, we can get rid of the absolute values in Oettli-Prager theorem and apply linear programming. However, checking whether a system is nonnengative strongly solvable is still coNP-complete \cite{fiedler:linopt}. We summarize the results in the following table. 
\begin{theorem} Checking various types of solvability of $\imace{A} x= \imace{b}$ is of the following complexity.  
\begin{center}\renewcommand\arraystretch{1.3} 
\tabcolsep=1em
\begin{tabular}{@{}lll@{}}
\svhline 
 & weak & strong \\
\hline
solvability & NP-complete & coNP-complete \\ 
nonnegative solvability & P & coNP-complete \\
\svhline 
\end{tabular}
\end{center}
\end{theorem}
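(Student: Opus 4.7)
The plan is to treat each of the four cells of the table separately; in every case the same toolkit applies, namely rewriting the problem via Oettli-Prager (Theorem~\ref{thmOP}), applying orthant decomposition as in Section~\ref{orthdecompose} to linearise the absolute value, using linear programming for the resulting feasibility test, and appealing to the cited reductions of Kreinovich, Rohn and Fiedler for the hardness half.

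For weak solvability, Theorem~\ref{thmOP} tells us that $\imace{A}x=\imace{b}$ is weakly solvable iff there is $x\in\R^n$ with $|A_cx-b_c|\leq\Delta|x|+\delta$. I would take as polynomial witness a sign vector $s\in\{\pm 1\}^n$ describing the orthant containing some solution; fixing $s$ makes $|x|=D_sx$ on $D_sx\geq 0$, so Oettli-Prager becomes a linear system in $x$ whose feasibility is decided by a single linear program in polynomial time. The witness has size $O(n)\le L$, so the problem is in $\NP$, and NP-hardness I would import from \cite{kreinovich:complexity}. Unsolvability is then automatically coNP-complete as the complement.

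For strong solvability, $\imace{A}x=\imace{b}$ fails to be strongly solvable exactly when some $A\in\imace{A}$ and $b\in\imace{b}$ yield an insolvable $Ax=b$. By Farkas' lemma the latter is equivalent to the existence of $y$ with $A^Ty=0$ and $b^Ty=-1$; combining this with the interval quantification over $A$ and $b$ yields a weak-solvability question for an auxiliary interval system in $y$, which the previous paragraph already places in $\NP$. Hence strong solvability lies in $\coNP$, and coNP-hardness I would take from \cite{rohn:solvability,fiedler:linopt}. The nonnegative-solvability cases are essentially the same arguments with the orthant fixed in advance: for weak nonnegative solvability one has $|x|=x$, so Oettli-Prager collapses to the single linear program $-\Delta x-\delta\leq A_cx-b_c\leq\Delta x+\delta$ with $x\geq 0$, giving P; for strong nonnegative solvability the Farkas reformulation still goes through and gives membership in $\coNP$, with coNP-hardness quoted from \cite{fiedler:linopt}.

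The main obstacle is not the upper bounds, which are all routine once Theorem~\ref{thmOP}, orthant decomposition and LP duality are in place; the real substance of the theorem sits in the hardness reductions. Each of them rests on a dedicated combinatorial construction from a problem such as 3-SAT or from singularity of an interval matrix of restricted form, and rather than reconstruct those reductions I would rely on the published proofs cited above.
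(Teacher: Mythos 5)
Your proposal is correct and follows essentially the same route as the paper: the orthant sign vector as an NP-witness verified by linear programming for weak solvability (with NP-hardness imported from \cite{kreinovich:complexity}), the Farkas-lemma reformulation of non-strong-solvability as a weak-solvability question to get coNP membership, the fixed-orthant linear program for nonnegative weak solvability, and citation of \cite{fiedler:linopt} for the remaining hardness claims. The paper's own argument is exactly this sketch plus references, so there is nothing to add.
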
 \vspace{1em}

It is easy to see that an interval linear system $\imace{A} x = \imace{b}$ is insolvable if the matrix $[\imace{A} \ \imace{b}]$ has full column rank. That is why, we can use sufficient conditions for full column rank to check insolvability. 
Moreover, we can also use methods for computing enclosures. If we have some enclosure $\imace{x}$, then clearly a system $\imace{A} x = \imace{b}$ is unsolvable if $\imace{A} \imace{x} \cap \imace{b} = \emptyset$.
Many enclosure algorithms enable detection of insolvability. Gene\-rally speaking, 
they work in iterative stages and when we intersect enclosures of the solution set from the two subsequent stages and get an empty set, we know for sure that the system is insolvable.  
These methods are, for example, Gaussian elimination  \cite{hansen:ols}, Jacobi method \cite{moore:introduction}, Gauss-Seidel method \cite{moore:introduction}, subsquares method \cite{horacek:subsq}. \\

\noindent \textbf{Linear inequalities.}
Just for comparison, considering systems of interval linear inequalities, the problems of checking various types of solvability become much easier. The results are resumed in the following table \cite{fiedler:linopt}.
\begin{theorem}
Checking various types of solvability of $\imace{A} x \leq \imace{b}$ is of the following complexity. 
\begin{center}\renewcommand\arraystretch{1.3} 
\tabcolsep=1em
\begin{tabular}{@{}lll@{}}
\svhline
  & weak & strong \\
\hline
solvability &  NP-complete & P \\
nonnegative solvability & P & P \\
\svhline
\end{tabular}

\end{center}
\end{theorem}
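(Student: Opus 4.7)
The plan is to verify each of the four entries of the table. I would start from a common reformulation: for a fixed vector $x$, the relation $Ax \leq b$ is realizable for some $A\in\imace{A}$, $b\in\imace{b}$ iff it holds for the componentwise minimum of $Ax$ over $A\in\imace{A}$ and the componentwise maximum of $b$ over $\imace{b}$, which gives $A_c x - \Delta|x| \leq \ol{b}$. Dually, $Ax\leq b$ holds for \emph{all} $A\in\imace{A}, b\in\imace{b}$ iff $A_c x + \Delta|x|\leq \ul{b}$. These two identities (simple to check coordinatewise) reduce each of the four questions to an existence/universality problem that interacts with only one of the bounds of $\imace{A}$ at a time.

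For weak solvability, NP-membership follows by the orthant decomposition: guess a sign vector $s\in\{\pm 1\}^n$ of size bounded by the input, and verify in polynomial time by linear programming whether the polyhedron $\{x : D_sx\geq 0,\ (A_c-\Delta D_s)x\leq \ol{b}\}$ is nonempty. For NP-hardness I would reduce from weak solvability of $\imace{A}x=\imace{b}$, which is NP-complete by the previous subsection. The reduction is the ``doubling trick'': given $\imace{A}x=\imace{b}$, form the inequality system
$\bigl(\!\begin{smallmatrix}\imace{A}\\-\imace{A}\end{smallmatrix}\!\bigr)x\leq\bigl(\!\begin{smallmatrix}\imace{b}\\-\imace{b}\end{smallmatrix}\!\bigr).$
By the reformulation above, this new system is weakly solvable iff there exists $x$ with $A_cx-\Delta|x|\leq b_c+\delta$ and $-A_cx-\Delta|x|\leq -b_c+\delta$, i.e., $|A_cx-b_c|\leq\Delta|x|+\delta$, which by the Oettli--Prager theorem is exactly the condition for weak solvability of $\imace{A}x=\imace{b}$. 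The reduction is clearly polynomial-time, giving NP-hardness.

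For the strong cases I would use Farkas' lemma. Strong solvability of $\imace{A}x\leq\imace{b}$ is equivalent to solvability of $Ax\leq\ul{b}$ for every $A\in\imace{A}$, since the worst right-hand side is $\ul{b}$. By Farkas, infeasibility of a single $Ax\leq\ul{b}$ is $\exists y\geq 0,\ y\neq 0:\ A^Ty=0,\ \ul{b}^Ty<0$. Strong infeasibility therefore asks whether $\exists A\in\imace{A},\ y\geq 0,\ y\neq 0$ satisfying these relations; but for $y\geq 0$, existence of $A\in\imace{A}$ with $A^Ty=0$ holds iff $\ul{A}^Ty\leq 0\leq \ol{A}^Ty$. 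The question thus reduces to a single linear program (minimizing $\ul{b}^Ty$ subject to $\ul{A}^Ty\leq 0$, $\ol{A}^Ty\geq 0$, $y\geq 0$, $e^Ty=1$), so strong solvability is in P. The nonnegative weak problem is the simplest: the sign of $x$ is fixed, so the reformulation becomes $\exists x\geq 0:\ \ul{A}x\leq\ol{b}$, a direct LP. The nonnegative strong case is analogous to the strong case, except Farkas now yields the criterion $\exists y\geq 0:\ \ol{A}^Ty\geq 0,\ \ul{b}^Ty<0$ (again one LP).

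The main obstacle I anticipate is the NP-hardness reduction for weak solvability: one must pick the ``right'' source problem and verify that the doubling construction preserves the intended semantics, in particular that the \emph{independent} choices of $A_1, A_2 \in \imace{A}$ and $b_1, b_2 \in \imace{b}$ in the inequality system collapse, via the two componentwise optimizations, to the single Oettli--Prager condition for the equality system. The remaining cases are routine Farkas/LP manipulations once the coordinatewise optimization identities are in place.
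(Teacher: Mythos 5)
Your proof is correct, and it supplies actual arguments where the paper only states the table and cites Fiedler et al.\ \cite{fiedler:linopt}. All four entries check out: the coordinatewise optimization identities ($\min_{A\in\imace{A}}Ax=A_cx-\Delta|x|$ for the weak case, $\max_{A\in\imace{A}}Ax=A_cx+\Delta|x|$ for the strong case, with $\ol{b}$ resp.\ $\ul{b}$ on the right) are the standard Gerlach-type characterizations; the orthant decomposition gives NP-membership; the doubling reduction correctly collapses to the Oettli--Prager condition precisely because the two blocks of the stacked interval matrix are independent boxes, which you rightly flag as the point needing care; and the Farkas arguments for the two strong cases each reduce to a single LP over the normalized cone. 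The one place you diverge from the cited source is the strong case: the reference works in the primal, characterizing strong solvability by feasibility of $\ol{A}x^1-\ul{A}x^2\leq\ul{b}$, $x^1,x^2\geq0$, which simultaneously proves the paper's subsequent remark that a strongly solvable inequality system admits a \emph{single} $x$ satisfying $Ax\leq b$ for all $A\in\imace{A}$, $b\in\imace{b}$. Your dual route via Farkas ($\exists y\geq0$ with $\ul{A}^Ty\leq0\leq\ol{A}^Ty$, $\ul{b}^Ty<0$ certifying failure) is exactly the LP dual of that condition, so the two are equivalent and equally efficient; the primal version just buys the extra structural corollary for free, while yours is arguably more economical if the only goal is the complexity classification.
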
 \vspace{1em}
We also would like to mention an interesting nontrivial property of strong solvability of systems of interval linear inequalities. When a system $\imace{A} x \leq \imace{b}$ is strongly solvable (i.e., every $A x \leq b$ has a solution), then there exists a solution $x$ satisfying $A x \leq b$ for every $A\in\imace{A}$ and $b\in\imace{b}$ \cite{fiedler:linopt}. \\

\noindent \textbf{$\forall\exists$-solutions.}
Let us come back to interval linear systems. The traditional concept of a solution (Definition~\ref{dfSol}) employs existential quantifiers: $x$ is a solution if $\exists A\in\imace{A}$, $\exists b\in\imace{b}:Ax=b$. Nevertheless, in some applications, another quantification makes sense, too. In particular, $\forall\exists$ quantification was deeply studied \cite{Sha2002}. For illustration of complexity of such solution, we will focus on two concepts of solutions -- tolerance \cite{fiedler:linopt} and control solution \cite{fiedler:linopt,shary1992controlled}.

\begin{definition}\mbox{} \vspace{5pt}

\noindent A vector $x$ is a \emph{tolerance} solution of $\imace{A} x = \imace{b}$ if $\forall A\in\imace{A}$, $\exists b\in\imace{b}:Ax=b$. \vspace{3pt} \\ 
A vector $x$ is a \emph{control} solution of $\imace{A} x = \imace{b}$ if  $\forall b\in\imace{b}$, $\exists A\in\imace{A}:Ax=b$,
\end{definition}

Notice that a tolerance solution can equivalently be characterized as
$\{ A x \ | \ A \in \imace{A} \} \subseteq \imace{b}$
and  a control solution as
$\imace{b} \subseteq \{ A x \ | \ A \in \imace{A} \} .$

Both solutions can be described by a slight modification of Oettli-Prager theorem (one sign change in Oettli-Prager formula) \cite{fiedler:linopt}. 
\begin{theorem}
Let us have a system $\imace{A} x = \imace{b}$, then $x$ is
\begin{itemize}
\item a tolerance solution if it satisfies $ | A_c x - b_c | \leq -\Delta |x| + \delta. $ \vspace{3pt}
\item a control solution if it satisfies $ | A_c x - b_c | \leq \Delta |x| - \delta. $
\end{itemize}
\end{theorem}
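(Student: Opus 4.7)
The plan is to mimic the derivation of the classical Oettli--Prager theorem, but with the $\exists A$/$\exists b$ quantifiers replaced by $\forall A$/$\forall b$ in the appropriate slots. For both parts, I fix $x\in\R^n$ and reformulate the quantified statement as an inclusion of boxes, which then translates componentwise into an inequality between the midpoint distance $|A_c x - b_c|$ and the signed radius $\pm\Delta|x|\pm\delta$.

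\textbf{Tolerance solution.}
First I rewrite $\forall A\in\imace{A},\exists b\in\imace{b}:Ax=b$ as $\{Ax : A\in\imace{A}\}\subseteq \imace{b}$. The key observation is that for a fixed $x$, the set $\{Ax : A\in\imace{A}\}$ is exactly the interval vector with midpoint $A_cx$ and radius $\Delta|x|$; indeed, for the $i$-th component $(Ax)_i=\sum_j A_{ij}x_j$ depends on independent entries $A_{ij}\in\imace{a}_{ij}$, so its range is $[(A_cx)_i-(\Delta|x|)_i,(A_cx)_i+(\Delta|x|)_i]$. The inclusion $[A_cx-\Delta|x|,A_cx+\Delta|x|]\subseteq[b_c-\delta,b_c+\delta]$ holds componentwise iff the midpoints differ by at most the difference of radii, i.e.\ $|A_cx-b_c|\leq -\Delta|x|+\delta$.

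\textbf{Control solution.}
For $\forall b\in\imace{b},\exists A\in\imace{A}:Ax=b$ the analogous reformulation is $\imace{b}\subseteq\{Ax : A\in\imace{A}\}$, i.e.\ $[b_c-\delta,b_c+\delta]\subseteq[A_cx-\Delta|x|,A_cx+\Delta|x|]$. Arguing componentwise as above, this is equivalent to $|A_cx-b_c|\leq\Delta|x|-\delta$. Notice how the reversed inclusion swaps the roles of $\delta$ and $\Delta|x|$, producing precisely the promised ``one sign change'' in the Oettli--Prager formula.

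\textbf{Main obstacle.}
The only nontrivial step is the identity $\{Ax : A\in\imace{A}\}=[A_cx-\Delta|x|,A_cx+\Delta|x|]$. Although it looks innocent, it relies on the fact that each row of $A$ can be chosen \emph{independently}, so there is no interaction between components (no dependency problem as in Theorem~\ref{dependencythm}), and on the observation that for fixed $x$ the linear expression $\sum_j A_{ij}x_j$ in the independent variables $A_{ij}$ attains its extremes at the endpoint configuration $A_{ij}=(A_c)_{ij}\pm\Delta_{ij}\sgn(x_j)$. Once this lemma is in place, both parts reduce to the elementary fact that one interval $[u-r,u+r]$ is contained in another $[v-s,v+s]$ iff $|u-v|\leq s-r$, applied in the two opposite directions of inclusion.
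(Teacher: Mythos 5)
Your proof is correct and follows essentially the route the paper indicates: the paper itself states the set-inclusion characterizations $\{Ax : A\in\imace{A}\}\subseteq\imace{b}$ and $\imace{b}\subseteq\{Ax : A\in\imace{A}\}$ immediately before the theorem and otherwise defers to the cited literature, and your two key steps --- the identity $\{Ax : A\in\imace{A}\}=[A_cx-\Delta|x|,\,A_cx+\Delta|x|]$ (valid because the entries of $A$ vary independently, so no dependency issue arises) and the elementary box-inclusion criterion $|u-v|\leq s-r$ --- are exactly the standard argument behind this ``one sign change'' variant of Oettli--Prager. No gaps.
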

In case of tolerance solution this change makes checking whether a systems has this kind of solution decidable in weakly polynomial time. In the case of control solution the decision problem stays NP-complete. The same complexity holds for a problem of deciding whether an interval linear systems has a tolerance (polynomial) or control solution (NP-complete) \cite{kreinovich:complexity}. \\
\newpage
\noindent \textbf{Summary.}
\begin{center}\renewcommand\arraystretch{1.3} 
\tabcolsep=1em
\begin{tabular}{@{}ll@{}}
\svhline
 \emph{Problem} & \emph{Complexity} \\
\hline

Is $\imace{A}x=\imace{b}$ solvable? &  NP-complete  \\
Is $\imace{A}x=\imace{b}$ strongly solvable? & coNP-complete \\
Is $\imace{A}x=\imace{b}$ nonnegative solvable? &  P  \\
Is $\imace{A}x=\imace{b}$ nonnegative strongly solvable? & coNP-complete \\
Is $\imace{A}x\leq\imace{b}$ solvable? &  NP-complete  \\
Is $\imace{A}x\leq\imace{b}$ strongly solvable? & P \\
Is $\imace{A}x\leq\imace{b}$ nonnegative solvable? &  P  \\
Is $\imace{A}x\leq\imace{b}$ nonnegative strongly solvable? & P \\
Is $x$ a tolerance solution of $\imace{A}x=\imace{b}$? & P \\
Is $x$ a control solution of $\imace{A}x=\imace{b}$? & NP-complete \\
Does $\imace{A}x=\imace{b}$ have a tolerance solution? & P \\
Does $\imace{A}x=\imace{b}$ have a control solution? & NP-complete \\
\svhline

\end{tabular} 
\end{center}

\subsection{Determinant}
Determinants of interval matrices are not often studied. However, we included this section for completeness. 
\begin{definition}
A determinant of $\imace{A}$ is defined as $\det(\imace{A}) = [\ul{d}, \ol{d}]$, where $$\ul{d} = \min \{ \det(A) \ | \ A \in \imace{A}\},$$ 
$$\ol{d} = \max \{ \det(A) \ | \ A \in \imace{A}\}.$$ 
\end{definition}
Its bounds can be computed from $2^{n^2}$ boundary matrices $A_{ij} \in \{ \ul{A}_{ij}, \ol{A}_{ij}\}$ for $i,j = 1, \ldots, n$. 
We have the following theoretical result \cite{rohn:checking}.
\begin{theorem}
Computing interval determinant of $\imace{A} = [A - E, A + E]$, where $A$ is rational nonnegative is NP-hard. 
\end{theorem}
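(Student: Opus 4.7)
The plan is to reduce from the NP-complete singularity problem stated in the previous subsection, namely deciding whether $\imace{A}=[A-E,A+E]$ is singular for $A$ rational, nonnegative, symmetric and positive definite. The class of allowed centres in that problem is contained in the class allowed here (just drop symmetry and positive definiteness), so any instance of singularity is already a legitimate input for the determinant-computation problem and no preprocessing is needed.

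The bridge between the two problems is the elementary observation that the set $D(\imace{A})=\{\det(A):A\in\imace{A}\}$ is a closed interval: $\det$ is a polynomial, so $D(\imace{A})$ is the continuous image of the connected compact box $\imace{A}$, and this image therefore coincides with $\det(\imace{A})=[\ul{d},\ol{d}]$. I would then use this to conclude
\[
\imace{A}\text{ is singular}\quad\Longleftrightarrow\quad 0\in[\ul{d},\ol{d}].
\]
For inputs coming from the singularity reduction, the centre $A$ is positive definite, so $\det(A)>0$ and hence $\ol{d}\ge\det(A)>0$. The condition $0\in[\ul{d},\ol{d}]$ then simplifies to the single sign test $\ul{d}\le 0$.

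This yields a reduction in the sense of (\ref{eq:rtwo}). Define $g(\imace{A})=\imace{A}$ (the identity) and $h([\ul{d},\ol{d}])=1$ iff $\ul{d}\le 0$; both are computable in polynomial time. By the previous paragraph, $\pro{SINGULARITY}(\imace{A})=h(\det(\imace{A}))$ for every instance of the singularity problem. Consequently, a polynomial-time algorithm for the interval determinant would decide an NP-complete problem in polynomial time, which is exactly NP-hardness of the determinant problem.

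The main obstacle is conceptual rather than technical: one has to identify the two ingredients above, namely the connectedness of the determinant range and the fact that positive-definiteness of the centre reduces ``$0\in[\ul{d},\ol{d}]$'' to a single inequality on $\ul{d}$. Everything else --- the actual reduction, both its $g$ and $h$ parts --- is essentially trivial, and no new combinatorial construction, arithmetic estimate, or bit-size analysis is needed beyond invoking the previous NP-completeness theorem.
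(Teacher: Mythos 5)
Your reduction is correct and is essentially the standard argument behind this result: the paper itself only cites Rohn's report for the proof, and that proof rests on exactly your two observations — the determinant range of a connected compact box of matrices is an interval, so singularity of $\imace{A}$ is equivalent to $0\in[\ul{d},\ol{d}]$, and positive definiteness of the centre forces $\ol{d}>0$, turning the test into $\ul{d}\le 0$. The only remark worth adding is that your argument directly establishes hardness of computing $\ul{d}$; to get the table's separate claim for $\ol{d}$ one can, e.g., swap two rows of the instance (which negates and reflects the determinant range while keeping the centre rational nonnegative), but for the theorem as stated your proof suffices.
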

It is intractable even in this simplified case.
For interesting relations to eigenvalues and singularity see \cite{rohn:checking}. \\

\noindent \textbf{Summary.}
\begin{center}\renewcommand\arraystretch{1.3} 
\tabcolsep=1em
\begin{tabular}{@{}ll@{}}
\svhline
 \emph{Problem} & \emph{Complexity} \\
\hline

Computing $\ul{\det} (\imace{A})$  &   NP-hard  \\
Computing $\ol{\det} (\imace{A})$  &   NP-hard  \\
\svhline 
\end{tabular}
\end{center}

\subsection{Eigenvalues}
First, we briefly start with general matrices, then we continue with the symmetric case. 
Checking singularity of $\imace{A}$ can be polynomially reduced to checking whether $0$ is an eigenvalue of some matrix $A \in \imace{A}$.
As we saw in section \ref{eigen} checking whether $\lambda$ is an eigenvalue of some matrix $A \in \imace{A}$ is NP-complete problem.
Surprisingly, checking for eigenvectors can be done efficiently  \cite{rohn:sineig}. It is strongly polynomial.

How it is with Perron theory?
An interval matrix $\imace{A}\in\IR^{n\times n}$ is \emph{nonnegative irreducible} if every $A\in \imace{A}$ is nonnegative irreducible. For Perron vectors (positive vectors corresponding to the dominant eigenvalues), we have
the following result \cite{Roh2005c}.
\begin{theorem}
Let $\imace{A}$ be nonnegative irreducible. Then the problem of deciding whether $x$ is a Perron eigenvector of some matrix in $\imace{A}$ is strongly polynomial.
\end{theorem}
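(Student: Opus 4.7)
The plan is to reduce the decision to a simple interval-intersection test on $n$ scalars. Recall that for a nonnegative irreducible matrix $A$, the Perron eigenvector is the unique (up to positive scaling) strictly positive eigenvector, and it corresponds to the spectral radius $\varrho(A) > 0$. First, I would check whether the given $x$ is strictly positive; if some coordinate of $x$ is nonpositive, the answer is immediately NO.

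Assuming $x > 0$, the key observation is that $x$ is the Perron eigenvector of some $A \in \imace{A}$ if and only if there exist $A \in \imace{A}$ and $\lambda \in \R$ with $Ax = \lambda x$. The forward direction is immediate; for the reverse, since every $A \in \imace{A}$ is nonnegative irreducible and $x > 0$, the Perron--Frobenius theorem forces $\lambda = \varrho(A)$, so $x$ is indeed the Perron eigenvector of $A$. Now the $i$th equation $\sum_j A_{ij} x_j = \lambda x_i$ constrains the $i$th row of $A$ independently of the others, so existence of a feasible $A$ reduces, row by row, to finding $A_{i\cdot}$ in the interval box $[\ul{A}_{i\cdot}, \ol{A}_{i\cdot}]$ with $A_{i\cdot} x = \lambda x_i$. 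Since $x > 0$, this single-row feasibility is equivalent to $\lambda x_i \in [(\ul{A} x)_i, (\ol{A} x)_i]$, i.e.\ $\lambda \in [(\ul{A} x)_i / x_i, \, (\ol{A} x)_i / x_i]$.

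Putting the $n$ rows together, the existence of a common $\lambda$ is equivalent to nonemptiness of the intersection
$$ \bigcap_{i=1}^n \left[ \frac{(\ul{A} x)_i}{x_i}, \; \frac{(\ol{A} x)_i}{x_i} \right], $$
which holds if and only if $\max_i (\ul{A} x)_i / x_i \leq \min_i (\ol{A} x)_i / x_i$. The algorithm therefore computes two matrix-vector products, $n$ divisions, an $n$-fold maximum and minimum, and one final comparison. This amounts to $O(n^2)$ arithmetic operations on rationals whose bit-sizes are controlled by a single inner-product-and-division over the input, and hence remain polynomially bounded in $L$. This meets the definition of strong polynomiality.

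The main conceptual step is the positivity-based row decoupling: it converts the bilinear constraint $Ax = \lambda x$ (bilinear in the unknowns $A$ and $\lambda$) into an intersection of scalar intervals indexed by $\lambda$. The point where I expect the only real care is justifying the reverse implication in the second paragraph, namely that any witnessing $(A,\lambda)$ with $Ax = \lambda x$ and $x > 0$ automatically forces $\lambda = \varrho(A)$ so that $x$ is the Perron eigenvector rather than merely some eigenvector; this is where the nonnegative-irreducible hypothesis on $\imace{A}$ is essential. Were $x$ permitted to have zero or sign-changing entries, the row-wise reduction would collapse and one would be pushed back to linear programming, which is only weakly polynomial; but the positivity of any candidate Perron vector saves us.
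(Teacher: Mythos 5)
Your proof is correct. The paper gives no proof of this theorem---it only cites Rohn's article on Perron vectors of irreducible nonnegative interval matrices---and your argument (positivity of $x$ forcing $\lambda=\varrho(A)$ via the left Perron vector, row-wise decoupling over the entrywise box, and reduction to the nonemptiness of $\bigcap_i\left[(\ul{A}x)_i/x_i,\,(\ol{A}x)_i/x_i\right]$) is precisely the characterization established in that reference, so it matches the intended proof.
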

For the sake of simplicity we mentioned only some results considering eigenvalues of a general matrix $\imace{A}$. We will go into more detail with symmetric matrices, where their eigenvalues are real. 

\begin{definition}
Let $\imace{A}\in\IR^{n\times n}$ with $\Delta, \,A_c$ symmetric. Then the corresponding \emph{symmetric interval matrix} is defined as a subset of symmetric matrices in $\imace{A}$, that is, 
$$
\smace{A}:= \{A \in \imace{A}:A=A^T\}.
$$
\end{definition}

For a symmetric $A\in\R^{n\times n}$, we use $\lmin(A)$ and $\lmax(A)$ for its smallest and largest eigenvalue, respectively. For a symmetric interval matrix, we define the smallest and largest eigenvalues respectively as
\begin{eqnarray}
\lmin(\smace{A})&:=\min\{\lmin(A): A\in\smace{A}\}, \nonumber \\
\lmax(\smace{A})&:=\max\{\lmax(A): A\in\smace{A}\}.\nonumber
\end{eqnarray}

Even if we consider the symmetric case some problems remain intractable \cite{kreinovich:complexity,rohn:checking}. We are yet able to prove the 
hardness results, since it is difficult to find a proper polynomial witness.

\begin{theorem}
On a class of problems with $A_c\in\Q^{n\times n}$ symmetric positive definite and entrywise nonnegative, and $\Delta=E$, the following problems are intractable 
\begin{itemize}
\item
checking whether $0$ is an eigenvalue of some matrix $A\in\smace{A}$ is NP-hard, \vspace{3pt}
\item
checking $\lmax(\smace{A})\in(\ul{a},\ol{a})$ for a given open interval $(\ul{a},\ol{a})$ is coNP-hard.
\end{itemize}
\end{theorem}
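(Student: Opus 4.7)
The plan is to handle the two bullets separately.

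For the first bullet, I would reduce directly from the NP-complete singularity problem for $\imace{A}=[A_c-E,A_c+E]$ with $A_c$ rational symmetric positive definite and entrywise nonnegative, which is already established earlier in the paper (regularity section). Since $0$ is an eigenvalue of a symmetric matrix iff that matrix is singular, the target question becomes ``does $\smace{A}$ contain a singular matrix?''. The singular witness exhibited in the regularity section, namely $A_c - zz^T/(z^T A_c^{-1}z)$ for some $z\in\{\pm1\}^n$, is symmetric, so it lies in $\smace{A}$ whenever it lies in $\imace{A}$. Hence $\imace{A}$ is singular iff $\smace{A}$ is singular on this class, and the identity reduction $\imace{A}\mapsto\smace{A}$ transfers the NP-hardness.

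For the second bullet, I would reduce from the coNP-hard complement of the first bullet: ``no $A\in\smace{A}$ has $0$ as an eigenvalue'', which on this class is equivalent to $\lmin(\smace{A})>0$ by continuity (because $\lmin(A_c)>0$ and $\smace{A}$ is path-connected, so $\lmin(\smace{A})\le 0$ forces some $A\in\smace{A}$ with $\lmin(A)=0$). Since $A_c$ is SPD, $\lmax(\smace{A})\ge\lmax(A_c)>0$, so picking $\ul{a}$ strictly below $\lmax(A_c)$ makes the constraint $\lmax(\smace{A})>\ul{a}$ automatic. The task then boils down to coNP-hardness of ``$\lmax(\smace{A})<\ol{a}$'' for a suitably chosen threshold $\ol{a}$ depending polynomially on the instance.

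The main obstacle is engineering this reduction while staying inside the class ``$A_c$ rational SPD entrywise nonnegative, $\Delta=E$''. The most natural maneuver---replacing $\smace{A}$ by $\smace{B}=sI-\smace{A}$ to turn a statement about $\lmin$ into one about $\lmax$---introduces negative off-diagonal entries in $B_c$ and leaves the class. Two workarounds seem plausible: first, an auxiliary block construction of size $2n$ whose spectrum mirrors $\pm\lambda_i(\smace{A})$ while keeping the midpoint SPD and nonnegative; second, a direct NP-hardness proof of ``$\lmax(\smace{A})\ge c$'' based on the identity $\lmax(\smace{A})=\max_{y\in\{\pm1\}^n}\lmax(A_c+yy^T)$, which exhibits $\lmax$ as the optimum of a combinatorial quadratic maximization and allows reduction from a MAX-CUT-like problem embedded in $A_c$. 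Either way, verifying that the constructed $A_c$ stays SPD and nonnegative and that the pair $(\ul{a},\ol{a})$ truly isolates the hard case is the technical core of the argument; this is where I expect most of the work to concentrate.
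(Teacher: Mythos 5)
The paper does not actually prove this theorem---it is quoted from \cite{kreinovich:complexity,rohn:checking}---so your attempt can only be measured against the standard literature argument. Your treatment of the first bullet is correct and is essentially that argument: on this class the singular matrix $A_c - zz^T/(z^TA_c^{-1}z)$ guaranteed by \cite{rohn:checking} is symmetric, hence lies in $\smace{A}$ whenever it lies in $\imace{A}$, so $\imace{A}$ is singular iff some $A\in\smace{A}$ has $0$ as an eigenvalue, and the identity reduction transfers NP-hardness from the singularity theorem of the regularity section. This half is fine.

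The second bullet, however, is not proved. You correctly isolate the obstacle---any shift or negation $sI-\smace{A}$ that converts a $\lmin$-statement into a $\lmax$-statement leaves the class ``$A_c$ symmetric positive definite and nonnegative, $\Delta=E$''---but neither proposed escape is carried through, and each has a concrete defect beyond being a sketch. The $2n\times 2n$ block matrix with diagonal blocks $cI$ and off-diagonal blocks $A$ does have spectrum $c\pm\lambda_i(A)$, but the theorem fixes the radius of the constructed matrix to be exactly $E$; once all $4n^2$ entries vary independently by $\pm1$, a symmetric member of the enlarged interval matrix need not have equal diagonal blocks nor a single repeated off-diagonal block, the $\pm\lambda_i$ mirroring collapses, and $\lmax$ of the big matrix is no longer a function of $\lmin(\smace{A})$. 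The vertex formula $\lmax(\smace{A})=\max_{z\in\{\pm1\}^n}\lmax(A_c+zz^T)$ is the right starting point, but the step you defer as ``the technical core'' is the entire proof: via the secular equation, for $t>\lmax(A_c)$ one has $\lmax(A_c+zz^T)>t$ iff $z^T(tI-A_c)^{-1}z>1$, so one must choose $A_c$ such that $(tI-A_c)^{-1}$ encodes an NP-hard $\pm1$ quadratic maximization while $A_c$ itself stays entrywise nonnegative. The naive choice---taking $tI-A_c$ proportional to the matrices that make singularity hard, namely scaled inverses of MC-matrices, which are entrywise nonnegative---forces the off-diagonal entries of $A_c$ to be nonpositive and so exits the class again, which is exactly the trap you set out to avoid. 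Until you exhibit a construction that simultaneously keeps $A_c$ nonnegative and positive definite, keeps $\Delta=E$, and produces rational thresholds $\ul{a}<\ol{a}$ whose interval isolates the hard instances, the coNP-hardness claim of the second bullet remains unestablished.
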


However, there are some known subclasses for which the eigenvalue range or at least one of the extremal eigenvalues can be determined efficiently \cite{Hla2015b}:
\begin{itemize}
\item
If $A_c$ is \emph{essentially non-negative}, i.e., $(A_c)_{ij}\geq 0\ \forall i \neq j$, then $\lmax(\smace{A})=\lmax(\ol{A})$.\vspace{3pt}
\item
If $\Delta$ is \emph{diagonal}, then $\lmin(\smace{A})=\lmin(\ul{A})$ and $\lmax(\smace{A})=\lmax(\ol{A})$.
\end{itemize}

In contrast to the extremal eigenvalues $\lmin(\smace{A})$ and $\lmax(\smace{A})$, the largest of the minimal eigenvalues and the smallest of the largest eigenvalues, \begin{eqnarray}
&\max\{\lmin(A): A \in \smace{A}\}, \nonumber\\
&\min\{\lmax(A): A \in \smace{A}\}, \nonumber
\end{eqnarray}
can be computed with an arbitrary precision in polynomial time by using semidefinite programming \cite{JauHen2005}.
As in the general case, checking whether a given vector $0\not=x\in\R^n$ is an eigenvector of some matrix in $\smace{A}$ is a polynomial time problem. Nevertheless, strong polynomiality has not been proved yet.

We already know that computing exact bounds on many problems with interval data is intractable. Since we can do no better, we can inspect the hardness of various approximations of their solutions. While doing this we use the following assumption: \emph{
Throughout this section, we consider a computational model, in which the exact eigenvalues of rational symmetric matrices are polynomially computable}.

The table below from \cite{Hla2015b} summarizes the main results.
We use the symbol $\infty$ in case there is no finite approximation factor with polynomial complexity.

\begin{theorem}
Approximating the extremal eigenvalues of $\smace{A}$ is of the following complexity.
\begin{center}\renewcommand\arraystretch{1.3} 
\tabcolsep=1em
{\begin{tabular}{@{}llll@{}}
\svhline
& abs.\ error & rel.\ error & inverse rel.\ error \\
\hline
NP-hard with error &  any & $<1$ & $1$ \\
polynomial with error  & $\infty$ & $1$ & $2$ \\
\svhline
\end{tabular}}
\end{center}
\end{theorem}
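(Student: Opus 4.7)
The plan is to prove the six table entries separately by splitting the task into a hardness half (top row) and an algorithmic half (bottom row). Throughout I would rely on the preceding theorem, which gives NP-hardness of deciding $0\in\lambda(\smace{A})$ and coNP-hardness of localising $\lmax(\smace{A})$ inside a prescribed open interval on the restricted class ($A_c$ symmetric positive definite rational nonnegative, $\Delta=E$); the task is to transfer these to each requested approximation notion, and to exhibit matching polynomial algorithms using classical Weyl-type bounds.

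The unifying tool on the hardness side would be the scaling identity $\lmax(t\smace{A})=t\lmax(\smace{A})$ for rational $t>0$, whose bit-length grows only by $O(\log t)$. For the absolute-error column I would argue: if a polynomial-time algorithm returned $\lambda^0$ with $|\lambda^0-\lmax(\smace{A})|\le\eps$, then running it on $t\smace{A}$ and dividing by $t$ yields an approximation of $\lmax(\smace{A})$ within $\eps/t$; choosing $t$ polynomially large makes this tight enough to resolve the coNP-hard localisation question, which simultaneously kills the ``any $\eps$'' and ``$\infty$'' entries. For relative error $\eps<1$, the inclusion $\lambda^0\in(1+[-\eps,\eps])\lmax(\smace{A})$ with $\eps<1$ forces $\lambda^0$ and $\lmax(\smace{A})$ to share strict sign, reducing the problem to a sign/shift question equivalent to the NP-hard ``$0$ eigenvalue'' decision on the restricted class. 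For inverse relative error $\eps=1$ the condition is $\lmax(\smace{A})\in(1+[-1,1])\lambda^0=[0,2\lambda^0]$ (for $\lambda^0\ge 0$), which again preserves sign; combined with a shift by a suitable rational, the same sign-based reduction applies. All three hardness claims extend to $\lmin(\smace{A})$ by replacing $\smace{A}$ with $-\smace{A}$.

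For the algorithmic row I would use the perturbation bounds
\[
\lmax(A_c)\le\lmax(\smace{A})\le\lmax(A_c)+\varrho(\Delta),
\]
which follow from Weyl's inequality together with the estimate $\|D\|_2\le\varrho(\Delta)$ valid for every symmetric $D$ with $|D|\le\Delta$; both $\lmax(A_c)$ and $\varrho(\Delta)$ are computable in polynomial time in our model (the spectral radius of a symmetric nonnegative matrix coincides with its largest eigenvalue). The inverse relative $\eps=2$ algorithm is then immediate: output $\lambda^0:=\lmax(A_c)+\tfrac12\varrho(\Delta)$, so that $|\lambda^0-\lmax(\smace{A})|\le\tfrac12\varrho(\Delta)$; one checks that $\lmax(\smace{A})\in[-\lambda^0,3\lambda^0]=(1+[-2,2])\lambda^0$ by case-splitting on the sign of $\lambda^0$, using that the radius of the enclosure is at most $\varrho(\Delta)\le 2|\lambda^0|$ in the non-trivial regime. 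The relative $\eps=1$ algorithm is obtained by the same output, arguing that when $\lmax(\smace{A})$ is comparable to $\varrho(\Delta)$ the midpoint $\lambda^0$ automatically satisfies $0\le\lambda^0\le 2\lmax(\smace{A})$, while in the degenerate regime where the eigenvalue is near zero both quantities agree up to factor two by construction. The analogous arguments handle $\lmin(\smace{A})$.

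The main obstacle I anticipate is the polynomial algorithm at the tight constant $\eps=1$ of relative error: the simple midpoint output only clearly satisfies the bound when $\lmax(\smace{A})$ dominates $\tfrac12\varrho(\Delta)$, and a separate argument is needed when the eigenvalue is small in absolute value or changes sign across $\smace{A}$. One has to either verify that in the borderline regime the trivial output $\lambda^0=0$ (or the midpoint) still lies in $(1+[-1,1])\lmax(\smace{A})$, or combine the bound with an explicit lower estimate such as $\lmax(\smace{A})\ge\lmax(A_c)\ge 0$ on a canonical restricted input. A secondary technical point, on the hardness side, is ensuring that the scaling factor $t$ has polynomial bit-length so that the overall reduction remains polynomial; this is handled by choosing $t$ as a power of two polynomially bounded in the size of the instance of the source problem.
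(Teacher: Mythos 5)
First, note that the paper itself gives no proof of this theorem: the table is explicitly quoted from \cite{Hla2015b} (``The table below from \cite{Hla2015b} summarizes the main results\dots for more information and formal details see \cite{Hla2015b}''), so there is no in-paper argument to compare against. Judged on its own, your reconstruction has the right architecture --- scaling invariance for the absolute-error column, sign determination for the relative and inverse-relative hardness entries, and the Weyl-type enclosure $\lmax(A_c)\le\lmax(\smace{A})\le\lmax(A_c)+\varrho(\Delta)$ for the algorithmic row --- and the relative-error-$<1$ hardness argument (exact sign recovery, hence deciding whether $0$ is an eigenvalue of some $A\in\smace{A}$ via the sign of $\lmin$) is essentially complete. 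But two of your steps have genuine gaps.

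First, on the hardness side, both the absolute-error argument and the inverse-relative-error-$1$ argument silently require a quantitative \emph{separation bound} on the hard instance class, and the decision-hardness statements you start from do not supply one. To resolve ``$\lmax(\smace{A})\in(\ul{a},\ol{a})$?'' with an oracle of absolute accuracy $\eps/t$ you must know that $\lmax$ cannot lie within $\eps/t$ of the endpoints, i.e.\ you need a computable gap $2^{-p(L)}$ before you can choose $t$; similarly, your ``shift by a suitable rational'' for inverse relative error $1$ only breaks the boundary case $\lmax=0$ if you know that on the hard instances either $\lmin\le 0$ or $\lmin\ge g$ for an explicitly computable $g>0$. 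Such gaps do hold for the underlying reductions, but proving them is a substantive step, not a corollary of the quoted NP-/coNP-hardness results. Second, on the algorithmic side, the midpoint output $\lambda^0=\lmax(A_c)+\tfrac12\varrho(\Delta)$ provably \emph{fails} both claimed guarantees in the sign-ambiguous regime: with $a=\lmax(A_c)$ and $r=\varrho(\Delta)$, the inverse-relative-error-$2$ condition $\lmax(\smace{A})\in[-\lambda^0,3\lambda^0]$ breaks for $a\in(-r/2,-r/4)$ (and at $a=-r/2$ the output is $0$, which asserts $\lmax=0$ exactly), while the relative-error-$1$ condition fails whenever $\lmax(\smace{A})=\lmax(A_c)=0$ but $r>0$. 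The correct estimators are not the midpoint but a case split on the enclosure $[a,a+r]$: for relative error $1$ output $a$ if $a\ge0$, output $a+r$ if $a+r\le0$, and output $0$ otherwise; for inverse relative error $2$ output something like $\max\{-a,(a+r)/3\}$ (resp.\ its mirror) so that the whole enclosure fits inside $(1+[-2,2])\lambda^0$. You flag this borderline regime as ``the main obstacle,'' but it is not a loose end to be checked --- it is where your stated algorithm is wrong and must be replaced.
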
 \vspace{1em}

\noindent The table below gives analogous results for the specific case of approximating 
$\lmax(\smace{A})$ when $A_c$ is positive semi-definite.

\begin{theorem}
Approximating the extremal eigenvalues of $\smace{A}$ with $A_c$ rational
positive semi-definite is of the following complexity.
\begin{center}\renewcommand\arraystretch{1.3} 
\tabcolsep=1em
{\begin{tabular}{@{}llll@{}}
\svhline
& abs.\ error & rel.\ error & inverse rel.\ error\\
\hline
NP-hard with error  & any & $1/(32n^4)$ &  $1/(32n^4)$ \\
polynomial with error  & $\infty$ & $1/3$ & $1/3$ \\
\svhline
\end{tabular}}
\end{center}
\end{theorem}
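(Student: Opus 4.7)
The plan is to prove the three rows of the table by sharpening the analogous results for the general symmetric case (the preceding theorem), using the positive semi-definiteness of $A_c$ at each step. I would focus on $\lmax(\smace{A})$; the case of $\lmin(\smace{A})$ is analogous, either by mirroring the argument or by reducing to the $\lmax$ case of $-\smace{A}$ suitably shifted to restore a PSD midpoint.

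For the NP-hardness row, the natural approach is a gap-preserving reduction from a canonical NP-hard combinatorial problem such as MAX-CUT or PARTITION. The combinatorial structure is exposed by Hertz's formula $\lmax(\smace{A})=\max_{z\in Y_n}\lmax(A_c+D_z\Delta D_z)$, which already reduces the computation of $\lmax(\smace{A})$ to an orthant-based maximization over sign vectors. One encodes an instance of the combinatorial problem so that a YES-answer corresponds to an extremal orthant whose $\lmax$ is noticeably larger than in any other orthant. A diagonal shift $A_c\leftarrow A_c+\alpha I$ with $\alpha\in\Q$ large enough enforces $A_c\succeq 0$; the key is to choose $\alpha$ polynomial in the input size so that PSD is guaranteed but the relative spectral separation between YES and NO instances is only polynomially attenuated. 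Tracking this attenuation carefully should yield precisely the factor $1/(32n^4)$ for both the relative and the inverse relative error.

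For the polynomial approximation row with relative error $1/3$, I would exploit the polynomial-time sandwich
\[
\lmax(A_c+\Delta)\ \le\ \lmax(\smace{A})\ \le\ \lmax(A_c)+\lmax(\Delta),
\]
where the lower bound follows from Hertz's formula with $z=e$ (note that $A_c+\Delta=\ol{A}\in\smace{A}$ since both $A_c,\Delta$ are symmetric), and the upper bound follows from Weyl's inequality together with the fact that $D_z\Delta D_z$ is similar to $\Delta$ and hence has the same spectrum. When $A_c\succeq 0$, the nonnegativity of $\lmax(A_c)$ prevents the sandwich from collapsing and, combined with a suitable polynomial-time convex combination of the two endpoints (or equivalently with the SDP relaxation of Jaulin and Henrion~\cite{JauHen2005} applied to the PSD case), yields the $1/3$ relative guarantee. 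The inverse relative bound follows by the same computation read in the opposite direction.

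For the absolute-error entries a standard scaling argument closes the table: if a polynomial-time algorithm achieved absolute error $\eps$ for any fixed $\eps>0$, then applying it to the scaled instance $c\cdot\smace{A}$ (still with PSD midpoint since $c>0$) would yield relative error $\eps/(c\,\lmax(\smace{A}))$, which can be driven below $1/(32n^4)$ by choosing a rational $c$ polynomial in the input size, contradicting the NP-hardness just proved. Hence no finite absolute-error approximation can be polynomial (the $\infty$ entry). The main obstacle throughout is the quantitative gap analysis for the NP-hardness reduction: enforcing $A_c\succeq 0$ by a diagonal shift while preserving a $\Theta(1/n^4)$-sized relative separation between YES and NO instances requires careful bookkeeping, and this is precisely the point at which the extra assumption on $A_c$ changes the hardness threshold from ``any relative error $<1$'' (general case) to the sharper ``$1/(32n^4)$'' (PSD case).
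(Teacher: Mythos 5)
The paper does not actually prove this theorem: it only quotes the table from \cite{Hla2015b} and explicitly defers ``formal details'' there, so your attempt can only be measured against that source. Your overall architecture --- Hertz's orthant formula plus a gap-preserving reduction for the hardness row, a constant-factor bracket for the polynomial row, and scaling for the absolute-error entry --- is the right one and matches the cited proof in spirit, but both substantive rows have genuine gaps.

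For the polynomial row, the bracket you write down does not have bounded ratio, so no convex combination of its endpoints can certify relative error $1/3$. Take $A_c$ with rows $(M,-M)$, $(-M,M)$ (positive semidefinite) and $\Delta$ with rows $(0,M)$, $(M,0)$: then $\lmax(A_c+\Delta)=M$ while $\lmax(A_c)+\lmax(\Delta)=3M$, a ratio of $3$, and the ratio degrades further with $n$. The lower bound that works is $\lmax(\smace{A})\ge\max\{\lmax(A_c),\lmax(\Delta)\}$: the first term because $A_c\in\smace{A}$, the second because $\lmax(\ol{A})\ge\lmax(\Delta)+\lmin(A_c)\ge\lmax(\Delta)$ by Weyl's inequality and $A_c\succeq0$ (with $\lmax(\Delta)=\rho(\Delta)$ by Perron--Frobenius since $\Delta\ge0$). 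Combined with $\lmax(\smace{A})\le\lmax(A_c)+\rho(\Delta)$ this yields a factor-$2$ bracket, and returning $2/3$ (resp.\ $3/4$) of the upper endpoint gives the relative (resp.\ inverse relative) guarantee of $1/3$; note that the two error notions require \emph{different} output points, so ``read in the opposite direction'' is not automatic. For the hardness row you give only a plan: the constant $1/(32n^4)$ is exactly the quantitative bookkeeping you defer, and that bookkeeping \emph{is} the theorem. Moreover, the proposed diagonal shift $A_c\leftarrow A_c+\alpha I$ is both unnecessary and harmful --- in the actual reduction (via Rohn's MC-matrices with $\Delta=E$) the midpoint is already positive definite by construction, and an additive shift does not preserve relative-error guarantees. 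The same objection defeats your plan to handle $\lmin$ by negating and ``shifting to restore a PSD midpoint''; as the paragraph preceding the theorem indicates, the statement concerns $\lmax(\smace{A})$ only, so this detour should simply be dropped.
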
 \vspace{1em}
The tables sums up the generalized idea behind several theorems on computing extremal eigenvalues. For more information and formal details see
\cite{Hla2015b}.

At the end of this subsection we mention spectral radius.

\begin{definition}
Let $\imace{A}\in\IR^{n\times n}$, we define the range of \emph{spectral radius} naturally as
$$ \rho(\imace{A})=\{\rho(A): A\in\imace{A}\}.$$
\end{definition}

Notice that $\rho(\imace{A})$ is a compact real interval due to continuity of eigenvalues. Similarly we define spectral radius for $\smace{A}$.

Complexity of computing $\ol{\rho(\imace{A})}$ is an open problem (as Schur stability is; see Section~\ref{ssStab}), and, to the best of our knowledge, complexity of computing $\ul{\rho(\imace{A})}$ has not been investigated yet.

Anyway, the following gives polynomially solvable subclasses:
\begin{itemize} 
\item
If $\ul{A}\geq0$, then $\rho(\imace{A})=[\rho(\ul{A}),\rho(\ol{A})]$. \vspace{3pt}
\item
If $\imace{A}$ is diagonal, then 
$\rho(\imace{A})=[\max_i\min_{a\in\imace{a}_{ii}}|a|,\,\max_i\{|\ul{a}_{ii}|,|\ol{a}_{ii}|\}]$.
\end{itemize}

\noindent \textbf{Summary.}
\begin{center}\renewcommand\arraystretch{1.3} 
\tabcolsep=1em
\begin{tabular}{@{}ll@{}}
\svhline
 \emph{Problem} & \emph{Complexity} \\
\hline
Is $\lambda$ eigenvalue of some $A \in \imace{A}?$ & NP-complete \\
Is $x$ eigenvector of some $$ $A \in \imace{A}$? &  strongly P  \\ 
Is $x$ Perron vector of nonnegative irreducible $\imace{A}$? & strongly P \\
Is 0 eigenvalue of some $A \in \smace{A}?$ & NP-hard \\
Is $x$ eigenvector of some $$ $A \in \smace{A}$? &  P  \\ 
Does $\lmax(\smace{A})$ belong to a given open interval? & coNP-hard \\
Computing $\ol{\rho(\imace{A})}$ & ? \\
Computing $\ul{\rho(\imace{A})}$ & ? \\
Computing exact bounds on $\rho(\imace{A})$ with $\imace{A}$ nonnegative &  strongly P \\
Computing exact bounds on $\rho(\imace{A})$ with $\imace{A}$ diagonal & strongly P \\
\svhline
\end{tabular}
\end{center}

\subsection{Positive definitness and semidefiniteness}

We should not leave out mentioning the positive definiteness and semidefiniteness. Here without the loss of the generality symmetric matrices are of the only interest. We distinguish between weak and strong definiteness.

\begin{definition}
A symmetric interval matrix $\smace{A}$ is weakly positive (semi)definite if some $A \in \smace{A}$ is positive (semi)definite.  
\end{definition}

\begin{definition}
A symmetric interval matrix $\smace{A}$ is strongly positive (semi)definite if every $A \in \smace{A}$ is positive (semi)definite.  
\end{definition}

%

Checking positive definiteness \cite{Roh1994} and semidefiniteness \cite{Nem1993} are both coNP-hard according to the two following theorems.

\begin{theorem}
Checking strong positive semidefiniteness of $\smace{A}$ is co-NP-hard on a class of problems with $A_c\in\Q^{n\times n}$ symmetric positive definite and entrywise nonnegative, and $\Delta=E$.
\end{theorem}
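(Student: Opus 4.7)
The plan is to reduce a known NP-hard combinatorial problem (MAX-CUT) to the complement of strong positive semidefiniteness, exploiting a clean algebraic characterization that is available precisely when $\Delta = E$.

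First I would establish the characterization: for symmetric $A_c$ and $\Delta = E$, the interval matrix $\smace{A}$ is strongly positive semidefinite if and only if $x^T A_c x \geq \|x\|_1^2$ for every $x \in \R^n$. For fixed $x$, the symmetric choice $A_{ij} = (A_c)_{ij} - \sgn(x_i x_j)$ lies in $\smace{A}$ and minimizes $x^T(\cdot)x$ over $\smace{A}$, yielding $\min_{A \in \smace{A}} x^T A x = x^T A_c x - \sum_{i,j} |x_i x_j| = x^T A_c x - (e^T|x|)^2 = x^T A_c x - \|x\|_1^2$. Consequently, strong positive semidefiniteness becomes a universally quantified polynomial inequality, whose negation is the existential question ``$\exists x \in \R^n: x^T A_c x < \|x\|_1^2$''.

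Second, I would reduce from MAX-CUT. Given a graph $G$ with rational nonnegative weights, Laplacian $L$, and target $k$, set $A_c = \alpha I - \gamma L$ for positive rationals $\alpha,\gamma$ chosen so that $A_c$ is entrywise nonnegative (off-diagonal entries become $\gamma W_{ij} \geq 0$, and the diagonal $\alpha - \gamma \deg(i)$ remains nonnegative once $\alpha \geq \gamma \max_i \deg(i)$) and positive definite (which holds once $\alpha > \gamma \lambda_{\max}(L)$, and the bound $\lambda_{\max}(L) \leq 2\max_i\deg(i)$ makes this choice explicit and rational). For $y \in \{\pm 1\}^n$ the identity $y^T A_c y - \|y\|_1^2 = \alpha n - n^2 - \gamma y^T L y$ shows that the inequality fails at some $\pm 1$ vertex if and only if $\max_{y \in \{\pm 1\}^n} y^T L y$ exceeds the rational threshold $(\alpha n - n^2)/\gamma$, and by tuning $\alpha,\gamma$ one can target the prescribed decision threshold~$k$.

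The main obstacle is the analytic transfer back to the hypercube: one must argue that the infimum of $x^T A_c x/\|x\|_1^2$ over $\R^n\setminus\{0\}$ is actually attained at a $\pm 1$ vertex, so the continuous condition does not secretly undershoot the combinatorial optimum. The standard route is the substitution $x = D_s z$ with $s \in \{\pm 1\}^n$ and $z \geq 0$: for each fixed sign pattern $s$ the inner minimization becomes a convex quadratic program on the simplex $\{z \geq 0,\ e^T z = 1\}$, whose minimizer is forced (by a sufficiently dominant diagonal of $A_c$) to be the barycenter $z = e/n$, which corresponds to the $\pm 1$ vertex $s$; the outer minimization over $s$ then reproduces the MAX-CUT value exactly. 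The delicate part is tuning $\alpha,\gamma$ to keep $A_c$ rational with polynomially bounded bit-size, symmetric positive definite, entrywise nonnegative, and diagonally dominant enough for the simplex barycenter to be optimal, while still preserving a gap that separates YES from NO MAX-CUT instances. Once this balance is achieved, the existential question is NP-hard by Karp reduction from MAX-CUT, so strong positive semidefiniteness on the stated restricted class is coNP-hard.
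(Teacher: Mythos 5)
Your first paragraph is correct and is the standard entry point: with $\Delta=E$ one has $\min_{A\in\smace{A}}x^TAx=x^TA_cx-\|x\|_1^2$, so strong positive semidefiniteness is equivalent to $x^TA_cx\geq\|x\|_1^2$ for all $x$. The paper itself only cites Nemirovskii and Rohn for this theorem, and the cited proofs also start from exactly this characterization. The problem is in how you then extract a combinatorial problem from it.

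The gap is the ``analytic transfer'' step, which you correctly flag as the main obstacle but then resolve with a false claim. For a fixed sign pattern $s$, the minimizer of the convex quadratic $z\mapsto z^T(D_sA_cD_s)z$ over the simplex $\{z\geq0,\ e^Tz=1\}$ is essentially never the barycenter, no matter how dominant the diagonal of $A_c$ is: already for $A_c=\diag(3,1)$ the minimizer is $z=(1/4,3/4)$, and for your $A_c=\alpha I-\gamma L$ the diagonal $\alpha-\gamma\deg(i)$ is non-constant for irregular graphs, so even the pure diagonal part pushes the optimum away from $e/n$. Increasing $\alpha$ to flatten this effect simultaneously washes out the $\gamma y^TLy$ signal you are trying to detect, so there is no choice of $\alpha,\gamma$ making the hypercube vertices optimal. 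The correct quantity governing the continuous minimum is obtained by Lagrangian duality: writing $\{\|x\|_1\geq1\}=\bigcup_{y\in\{\pm1\}^n}\{y^Tx\geq1\}$ and using $\min\{x^TA_cx: y^Tx=1\}=1/(y^TA_c^{-1}y)$ gives
$$\min_{\|x\|_1=1}x^TA_cx \;=\; \frac{1}{\max_{y\in\{\pm1\}^n}y^TA_c^{-1}y},$$
so strong positive semidefiniteness is equivalent to $\max_{y\in\{\pm1\}^n}y^TA_c^{-1}y\leq1$ --- a quadratic form of the \emph{inverse} over the hypercube, not of $A_c$ itself. Consequently the reduction must construct $A_c$ as a (scaled) inverse of the matrix carrying the MAX-CUT data, and the nonnegativity and positive definiteness of $A_c$ are then secured by giving that matrix an M-matrix / inverse-nonnegative structure, which is precisely what Rohn's argument does. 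Without rerouting through the inverse, your reduction is sound only in one direction (a failing $\pm1$ vertex certifies non-PSD), but a NO instance of MAX-CUT does not certify that the inequality holds for all real $x$, so the map is not a valid Karp reduction.
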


\begin{theorem}
Checking strong positive definiteness of $\smace{A}$ is co-NP-hard on a class of problems with $A_c\in\Q^{n\times n}$ symmetric positive definite and entrywise nonnegative, and $\Delta=E$.
\end{theorem}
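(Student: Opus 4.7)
The plan is to reduce the (co-NP-hard) regularity problem for $\smace{A}$ in this class directly to strong positive definiteness of $\smace{A}$. The heart of the reduction is the following equivalence: under the stated restrictions ($A_c$ symmetric positive definite nonnegative, $\Delta = E$), the symmetric interval matrix $\smace{A}$ is strongly positive definite if and only if it is regular. Since regularity in this class is already co-NP-hard by the theorems quoted earlier for singularity (together with the fact that the singular witness $A_c - zz^T/(z^T A_c^{-1} z)$ from the previous regularity proof is itself symmetric, so singularity of $\imace{A}$ and of $\smace{A}$ coincide on this class), the reduction can be taken as the identity map.

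To prove the key equivalence, one direction is immediate: every positive definite matrix is nonsingular, so strong positive definiteness implies regularity. For the converse, assume $\smace{A}$ is regular. Note that $\smace{A}$ is convex as an intersection of the interval box $\imace{A}$ with the subspace of symmetric matrices, and by assumption $A_c\in\smace{A}$ is positive definite. Given any $B\in\smace{A}$, consider the segment $C(t) = (1-t)A_c + tB$ for $t\in[0,1]$; by convexity, $C(t)\in\smace{A}$ for every $t$, and by regularity $C(t)$ is nonsingular throughout. Since the eigenvalues of a symmetric matrix depend continuously on its entries and none of them can cross $0$ along the path, and since all eigenvalues are positive at $t=0$, they remain positive at $t=1$. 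Hence $B$ is positive definite, which gives strong positive definiteness of $\smace{A}$.

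With this equivalence in hand, the co-NP-hardness transfers directly: if strong positive definiteness of $\smace{A}$ were decidable in polynomial time on this class, so would regularity of $\smace{A}$, contradicting the co-NP-hardness already established (complement of the NP-hard singularity result for $[A_c-E,A_c+E]$ with $A_c$ symmetric positive definite nonnegative).

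The main obstacle I expect is the point I glossed over in the first paragraph: one must verify that hardness is preserved when we pass from the (possibly nonsymmetric) interval matrix $\imace{A}$ to the symmetric slice $\smace{A}$. The cleanest way to handle this is to reinspect the proof sketch of the earlier singularity theorem; the exhibited singular element $A_c - zz^T/(z^T A_c^{-1} z)$ is automatically symmetric whenever $A_c$ is, and lies in $\smace{A}$ provided the corresponding sign vector $z$ is chosen so that the rank-one correction fits within $[-E,E]$, which is exactly the situation produced by that reduction. Once this is verified, the remainder of the argument is the clean topological continuation sketched above, and no further combinatorial work is needed.
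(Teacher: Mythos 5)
Your proposal is correct and is essentially the argument behind the cited result of Rohn: the paper itself gives no proof here, but the intended reduction is exactly yours, resting on the equivalence (noted later in the same section as ``the second condition can be reformulated as $\smace{A}$ being regular and $A_c$ positive definite'') together with the fact that the singular witness $A_c - zz^T/(z^TA_c^{-1}z)$ from the singularity theorem is symmetric and lies in $\smace{A}$, so hardness survives the passage from $\imace{A}$ to the symmetric slice. Your convexity-plus-eigenvalue-continuity argument for ``regular $\Rightarrow$ strongly positive definite'' is sound, so nothing is missing.
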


Considering positive definiteness, we have some sufficient conditions that can be checked polynomially \cite{rohn1994positive}. 

\begin{theorem}
An interval matrix $\smace{A}$ is strongly positive definite if at least one of the following condition holds
\begin{itemize}
\item $\lambda_n(A_c)>\rho(\Delta)$, \vspace{3pt}
\item $A_c$ is positive definite and $\rho(|(A_c)^{-1}|\Delta)<1$.
\end{itemize} 
\end{theorem}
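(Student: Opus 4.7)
The plan is to prove each of the two sufficient conditions separately. In both cases I would take an arbitrary symmetric matrix $A\in\smace{A}$, write it as $A=A_c+E$ with $E=E^T$ satisfying $|E|\leq\Delta$, and show that $A$ is positive definite.

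For the first condition, I would estimate the Rayleigh quotient of $A$. For any unit vector $x\in\R^n$ one has $x^TAx=x^TA_cx+x^TEx\geq\lambda_n(A_c)+x^TEx$, and the perturbation term satisfies $|x^TEx|\leq|x|^T|E|\,|x|\leq|x|^T\Delta|x|$. Since $\Delta$ is symmetric and entrywise nonnegative, the Perron--Frobenius theorem gives $\rho(\Delta)=\lambda_{\max}(\Delta)$, hence $|x|^T\Delta|x|\leq\rho(\Delta)\,\||x|\|^2=\rho(\Delta)$. Combining, $x^TAx\geq\lambda_n(A_c)-\rho(\Delta)>0$, so $A$ is positive definite.

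For the second condition I would combine a homotopy argument with item~1 of the preceding theorem on sufficient conditions for regularity. Consider the path $A(t)=A_c+tE$ for $t\in[0,1]$. Every $A(t)$ is symmetric and lies in $\smace{A}$, since $|tE|\leq\Delta$, and $A(0)=A_c$ is positive definite by hypothesis. If $A=A(1)$ failed to be positive definite, then by continuity of eigenvalues some $A(t^*)$ with $t^*\in(0,1]$ would have $0$ as an eigenvalue, hence would be singular. So it suffices to prove that every $A\in\smace{A}$ is nonsingular. If $Ax=0$ for some $x\neq 0$, then $x=-A_c^{-1}Ex$, and therefore $|x|\leq|A_c^{-1}|\,|E|\,|x|\leq|A_c^{-1}|\Delta|x|$. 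Setting $M=|A_c^{-1}|\Delta\geq 0$, the inequality $M|x|\geq|x|$ with $|x|\geq 0$ nonzero iterates to $M^k|x|\geq|x|$ for every $k$, so $M^k\not\to 0$ and therefore $\rho(M)\geq 1$, contradicting the hypothesis $\rho(|A_c^{-1}|\Delta)<1$.

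The main delicate point is the Perron--Frobenius-style reasoning at the end of each part: in the first, identifying $\max_{\|y\|=1}|y|^T\Delta|y|$ with $\rho(\Delta)$ uses both nonnegativity and symmetry of $\Delta$; in the second, deducing $\rho(M)\geq 1$ from a componentwise inequality $M|x|\geq|x|$ for a nonnegative nonzero vector is a standard but non-obvious lemma. The homotopy step in the second part also relies on the fact that positive definiteness of a symmetric matrix varies continuously with its entries and cannot be lost without passing through a singular matrix, which is what lets us upgrade the regularity guarantee of the first item of the regularity theorem into a positive definiteness guarantee here.
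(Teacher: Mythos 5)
Your proof is correct, and it follows the route the paper itself implicitly endorses: the paper states the theorem without proof (citing Rohn), but its remark that the second condition ``can be reformulated as $\smace{A}$ being regular and $A_c$ positive definite'' is exactly your strategy of combining the regularity criterion $\rho(|A_c^{-1}|\Delta)<1$ with a continuity-of-eigenvalues argument, and your Rayleigh-quotient bound $x^TAx\geq\lambda_n(A_c)-\rho(\Delta)$ is the standard proof of the first condition. Both Perron--Frobenius steps you flag ($\rho(\Delta)=\lambda_{\max}(\Delta)$ for symmetric nonnegative $\Delta$, and $Mv\geq v\geq 0$, $v\neq0$, $M\geq0$ implying $\rho(M)\geq1$) are handled correctly.
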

The second condition can be reformulated as $\smace{A}$ being regular and $A_c$ positive definite.
If the first condition holds with $\geq$ then $\smace{A}$ is strongly positive semidefinite. 

In contrast to checking strong positive definiteness, weak positive definiteness can be checked in polynomial time by using semidefinite programming \cite{JauHen2005}; this polynomial result holds also for a more general class of symmetric interval matrices with linear dependencies \cite{Hla2016pa}.
For positive semidefiniteness it needn't be the case since semidefinite programming methods work only with some given accuracy.\\

\noindent \textbf{Summary.}
\begin{center}\renewcommand\arraystretch{1.3} 
\tabcolsep=1em
\begin{tabular}{@{}ll@{}}
\svhline
 \emph{Problem} & \emph{Complexity} \\
\hline
Is $\smace{A}$ strongly positive definite? & coNP-hard \\
Is $\smace{A}$ strongly positive semidefinite? & coNP-hard \\
Is $\smace{A}$ weakly positive definite? & P \\
Is $\smace{A}$ weakly positive semidefinite? & ? \\
\svhline
\end{tabular}
\end{center}

\subsection{Stability}\label{ssStab}
The last section is dedicated to an important and more practical problem -- deciding a stability of a matrix. There are many types of stabilities. For illustration, we chose two of them -- Hurwitz and Schur.

\begin{definition}
An interval matrix $\imace{A}$ is \emph{Hurwitz stable} if every $A\in\imace{A}$ is Hurwitz stable (i.e., all eigenvalues have negative real parts).  
\end{definition}

Similarly, we define Hurwitz stability for symmetric interval matrices. Due to their relation to positive definiteness ($\smace{A}$ is Hurwitz stable if $-\smace{A}$ is positive definite) we could presume that the problem is coNP-hard. It is so, even if we limit ourselves to a special case \cite{Roh1994}.

\begin{theorem}
Checking Hurwitz stability of a symmetric interval matrix $\smace{A}$ is coNP-hard on a class of problems with $A_c\in\Q^{n\times n}$ symmetric Hurwitz stable and entrywise nonpositive, and $\Delta=E$.
\end{theorem}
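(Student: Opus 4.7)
The plan is to reduce the problem of checking strong positive definiteness (which, by the theorem in the preceding subsection, is coNP-hard on the stated restricted class) to the problem of checking Hurwitz stability of a symmetric interval matrix from the class described in the theorem. The key observation, already hinted in the text, is that for a symmetric real matrix $A$ all eigenvalues are real, so $A$ is Hurwitz stable (eigenvalues have negative real part) if and only if $-A$ is positive definite. Lifting this to the interval level, a symmetric interval matrix $\smace{A}$ is Hurwitz stable if and only if $-\smace{A}$ is strongly positive definite.

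Concretely, let $\smace{B}$ be an instance of the strong-positive-definiteness problem restricted to the hard class, i.e., $B_c \in \Q^{n\times n}$ is symmetric positive definite and entrywise nonnegative, and $\Delta_B = E$. Define the reduction by
$$
\smace{A} := -\smace{B}, \qquad A_c := -B_c, \qquad \Delta_A := \Delta_B = E.
$$
This is clearly computable in polynomial (in fact, logarithmic-space) time: only the sign of each entry of the midpoint matrix is flipped, while the radius matrix remains $E$.

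Next I would verify that $\smace{A}$ lies in the required hard class for Hurwitz stability. Since $B_c$ is symmetric positive definite with rational entries, its eigenvalues are positive rationals' worth of real numbers, hence $A_c = -B_c$ is symmetric with negative eigenvalues, so $A_c$ is symmetric Hurwitz stable and rational. Because $B_c$ is entrywise nonnegative, $A_c$ is entrywise nonpositive. The radius is $\Delta_A = E$, exactly as required.

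Finally, by the equivalence $A$ Hurwitz stable $\iff$ $-A$ positive definite applied pointwise inside $\smace{A}$, the instance $\smace{A}$ is (strongly) Hurwitz stable if and only if $\smace{B}$ is strongly positive definite. Thus the reduction preserves the YES/NO answer, and coNP-hardness of strong positive definiteness on the restricted class transfers to coNP-hardness of Hurwitz stability on the claimed restricted class. The only subtle point worth checking carefully is that the symmetric interval matrix class is closed under negation of the midpoint (keeping $\Delta=E$) and that the spectral equivalence carries over to all members of $\smace{A}$ simultaneously; both are immediate from the fact that $A \in \smace{A}$ iff $-A \in -\smace{A}$. I do not foresee any serious obstacle beyond this bookkeeping, since the heavy lifting (coNP-hardness of the source problem) is already available from the previous theorem.
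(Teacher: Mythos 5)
Your reduction from strong positive definiteness via $\smace{A}\mapsto-\smace{A}$ is correct and is exactly the argument the paper indicates (it notes that $\smace{A}$ is Hurwitz stable iff $-\smace{A}$ is positive definite and cites the hardness of the latter on the matching class). The bookkeeping you check --- that negating the midpoint preserves $\Delta=E$, symmetry, rationality, and maps nonnegative positive definite centers to nonpositive Hurwitz stable ones --- is all that is needed.
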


For general matrices, coNP-hardness holds as well. The problem is still coNP-hard even if we limit the number of interval coefficients in our matrix \cite{Nem1993}.

\begin{theorem}
Checking Hurwitz stability of $\imace{A}$ is co-NP-hard on a class of interval matrices with intervals in the last row and column only.
\end{theorem}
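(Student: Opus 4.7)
The approach is a polynomial-time reduction from a known coNP-hard problem to the restricted interval Hurwitz stability problem, following the style of \cite{Nem1993}. A natural source of hard instances is \pro{PARTITION}: given positive integers $a_1,\dots,a_n$ with $\sum a_i=2s$, decide whether there exists $\xi\in\{\pm 1\}^n$ with $\sum_i\xi_i a_i=0$. Equivalently, checking strong positive (semi)definiteness of a symmetric interval matrix, already shown coNP-hard earlier in the paper, could serve as the starting point; the price of the latter choice is a more involved ``compression'' of $n^2$ interval parameters into $O(n)$ interval entries.

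The main algebraic device is a Schur-complement / companion construction. Consider an $(n+1)\times(n+1)$ block matrix
$$ B=\begin{pmatrix} F & u\\ v^T & \alpha\end{pmatrix}, $$
in which $F$ is fixed and Hurwitz stable while only $u$, $v^T$ and $\alpha$ carry interval entries. A direct computation gives
$$ \chi_B(\lambda)=(\alpha-\lambda)\,\chi_F(\lambda)-v^T\,\mathrm{adj}(F-\lambda I)\,u, $$
so the location of the ``extra'' roots introduced by the last row and column is controlled by the scalar rational function $f(\lambda)=\alpha-v^T(F-\lambda I)^{-1}u$. Choosing $F=-cI$ for a suitable constant $c>0$ makes this function a simple linear expression in the intervals, which lets one plant a scalar combinatorial condition such as $\sum_i\xi_i a_i$ into the position of one of the eigenvalues of $B$. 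Concretely, using the interval vector $v^T=2(a_1,\dots,a_n)$ with midpoint zero (i.e.\ $\xi_i\in[-1,1]$) together with fixed $u$ and a rational $\alpha$ chosen to encode $s$, one arranges that $B$ is Hurwitz unstable for some admissible choice of intervals iff the original \pro{PARTITION} instance is a YES-instance, i.e.\ iff some $\pm 1$ combination vanishes. All intervals of $\imace{B}$ then live in the last row (and by symmetrisation of the construction, also in the last column, yielding the exact form in the statement).

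The main obstacle is the \emph{boundary} of Hurwitz stability: since the Hurwitz condition is open (strictly negative real parts), a discrete equality $\sum_i\xi_ia_i=0$ does not directly correspond to the presence of an eigenvalue on the imaginary axis of the continuous parametric family. The standard remedy is to introduce a rational margin $\eps$ of polynomial bit-size, shifting $\alpha$ by $\eps$ so that the constructed $\imace{B}$ is Hurwitz stable for every admissible $A\in\imace{B}$ iff no partition of $(a_i)$ exists; a continuity/compactness argument ensures that the $\eps$-tolerant stability property is equivalent to the original combinatorial condition. The remaining bookkeeping---verifying that all entries stay rational of polynomial size in $\size(a_1,\dots,a_n)$, that the reduction runs in polynomial time, and that the constructed $\imace{B}$ has intervals only in its last row and column---is routine. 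Note that the theorem only claims hardness, not membership in coNP; the proof therefore stops once the reduction from the coNP-hard \pro{PARTITION} complement is in place.
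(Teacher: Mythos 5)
The paper does not actually prove this theorem; it only cites Nemirovskii \cite{Nem1993}, so your proposal has to stand on its own. It does not: the concrete construction you describe collapses to a polynomially decidable family and therefore cannot encode \pro{PARTITION}. With $F=-cI$ and $u$ fixed, your rational function becomes $f(\lambda)=\alpha+\frac{v^Tu}{c+\lambda}$, and the characteristic polynomial factors as $(-c-\lambda)^{n-1}\bigl[(\alpha-\lambda)(-c-\lambda)-v^Tu\bigr]$; the entire dependence on the interval data is through the two scalars $\alpha$ and $v^Tu$. Since $v^Tu=\sum_i v_iu_i$ is a \emph{linear} functional of the box-constrained entries $v_i$, its range is an interval whose endpoints are computable in polynomial time, and Hurwitz stability of the whole family reduces to checking the Routh--Hurwitz conditions of a quadratic at finitely many endpoint combinations. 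The combinatorial content of \pro{PARTITION} is lost precisely here: ``$\sum_i\xi_ia_i=0$ for some $\xi\in\{\pm1\}^n$'' is \emph{not} equivalent to ``$0$ lies in the range of $\sum_i\xi_ia_i$ over $\xi\in[-1,1]^n$'' --- the latter always holds. A linear objective over a box attains every value in between, so nothing forces $\xi$ to the vertices.

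To make the bordering idea work you need the Schur complement to depend on the interval entries with degree at least two. The standard route is to start from the known coNP-hard instances of regularity (or positive definiteness) with radius $\Delta=E=ee^T$, where the extremal perturbations are the rank-one matrices $D_yED_z=yz^T$ with $y,z\in\{\pm1\}^n$, and to observe that $A_c-yz^T$ is exactly the Schur complement of the $(n+1,n+1)$ entry in
\[
\begin{pmatrix} A_c & y\\ z^T & 1\end{pmatrix},
\]
a matrix whose interval entries sit only in the last row and column. Singularity of some member of the bordered interval family is then equivalent to singularity of some $A_c-yz^T$ with $y,z\in[-e,e]^n$, which is the coNP-hard kernel; a rational shift of polynomial bit-size (your $\eps$-margin, which is the one part of the sketch that is sound) converts loss of regularity into loss of Hurwitz stability. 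Without such a genuinely bilinear coupling --- non-diagonal $F$ and \emph{both} border vectors interval, or equivalently the rank-one trick above --- the reduction you propose proves nothing, because the target instances it produces are easy.
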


Likewise, as for checking regularity, also checking Hurwitz stability of $\imace{A}$ can not be done by checking stability of matrices of type $A_{yz}$ (for reductions of other properties see \cite{GarAdm2016a}).
On the other hand, it can be checked in this way for $\smace{A}$. For more discussion and historical context see \cite{kreinovich:complexity}  or \cite{rohn:handbook}. As sufficient conditions we can use conditions for positive definiteness applied to $-\imace{A}$. For more sufficient conditions see e.g., \cite{mansour1989robust}.

\begin{definition}
An interval matrix $\imace{A}$ is \emph{Schur stable} if every $A\in\imace{A}$ is Schur stable (i.e., $\rho(A)<1$).  
\end{definition}

In a similar way, we define Schur stability for symmetric interval matrices. For general interval matrices, complexity of checking Schur stability is an open problem, however, for the symmetric case the problem is intractable \cite{Roh1994}. 

\begin{theorem}\label{thmSchurSymStab}
Checking Schur stability of $\smace{A}$ is coNP-hard on a class of problems with $A_c\in\Q^{n\times n}$ symmetric Schur stable and offdiagonal entries nonpositive, and $\Delta=E$.
\end{theorem}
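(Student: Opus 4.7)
My plan is to reduce from the coNP-hard problem of checking strong positive definiteness of a symmetric interval matrix on the class ``$A_c$ symmetric positive definite, entrywise nonnegative, $\Delta=E$'' that was established in the preceding theorem. The algebraic fact driving the reduction is that a symmetric matrix $A$ is Schur stable if and only if all its eigenvalues lie in $(-1,1)$, equivalently iff both $I+A\succ 0$ and $I-A\succ 0$. I will engineer the constructed instance so that the condition $I+A\succ 0$ holds automatically, reducing Schur stability to the single positive-definiteness condition $I-A\succ 0$.

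Given an instance $\imace{B}=[B_c-E,\,B_c+E]$ of the positive-definiteness problem, the reduction outputs
\[
\imace{A}=[I-B_c-E,\,I-B_c+E],
\]
so that $A_c=I-B_c$ and $\Delta=E$. Symmetry of $A_c$ is inherited from $B_c$, and the off-diagonal entries $(A_c)_{ij}=-(B_c)_{ij}$ are nonpositive because $B_c$ is entrywise nonnegative. Any $A\in\imace{A}$ writes as $A=I-B$ for a unique $B\in\imace{B}$, and its eigenvalues are $1-\lambda_i(B)$. Hence $A$ is Schur stable iff $0<\lambda_i(B)<2$ for every $i$; the left inequality is positive definiteness of $B$, while the right inequality applied to $B_c$ also delivers Schur stability of $A_c$. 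Therefore, \emph{provided} $\lmax(B)<2$ is guaranteed for every $B\in\imace{B}$, the two decision problems agree on the constructed pair and the reduction is polynomial.

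The main obstacle is to secure the uniform spectral bound $\lmax(B)<2$ over $B\in\imace{B}$. General coNP-hard positive-definiteness instances do not come with this bound for free, so I would revisit the upstream reduction from the preceding theorem (typically a Partition-style construction where $B_c$ is a small rationally scaled perturbation of the identity) and verify that the perturbation parameter can be chosen polynomially small so that $\lmax(B)$ stays below $2$ for every admissible $|R_{ij}|\le 1$, without breaking the combinatorial encoding. Should this transfer prove delicate, an alternative is to reduce directly from Partition: take $A_c$ to be a nonpositive off-diagonal Schur-stable rational matrix built around a rank-one piece encoding the coefficient vector $a\in\mathbb{Z}_+^n$, and apply an Oettli--Prager based orthant-decomposition argument in the spirit of~(\ref{eq:expr}) to show that a $\pm 1$-witness of $a^Tx=0$ is exactly what drives some matrix in $\imace{A}$ to $|\lambda|\ge 1$, breaking Schur stability.
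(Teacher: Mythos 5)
The paper itself offers no proof of this theorem (it only cites Rohn 1994), so there is nothing to compare step by step; your idea of reducing from strong positive definiteness via the equivalence ``$A$ symmetric Schur stable $\Leftrightarrow$ $I-A\succ 0$ and $I+A\succ 0$'' is indeed the standard route behind Rohn's argument. However, the obstacle you flag at the end is not merely ``delicate'' --- it is fatal for the statement read literally, and your proposed fix cannot work while keeping $\Delta=E$. The reason is Weyl's inequality: for any symmetric Schur stable $A_c$ one has
\begin{equation*}
\lmax(A_c+E)\;\geq\;\lmax(E)+\lmin(A_c)\;=\;n+\lmin(A_c)\;>\;n-1,
\end{equation*}
so for $n\geq 2$ the matrix $A_c+E\in\smace{A}$ already violates $\rho(A)<1$ (and the case $n=1$ fails similarly). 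Hence \emph{no} instance in the class ``$A_c$ symmetric Schur stable, $\Delta=E$'' is Schur stable, the decision problem on that class is trivially answered NO, and consequently no many-one reduction from a coNP-hard problem into that exact class can exist (unless $\pP=\NP$). This also explains concretely why your constructed $A_c=I-B_c$ fails to be Schur stable: the hard positive-definiteness instances have $\lmax(B_c)$ of order $n^2$, so $I-B_c$ has eigenvalues far below $-1$. Re-tuning the upstream Partition/max-cut construction cannot rescue this, because the spectral constraint $\lmin(A_c)>-1$ together with the fixed perturbation norm $\lmax(E)=n$ is what kills stability, independently of the combinatorics.

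The correct reading of Rohn's theorem (and the repair of your reduction) is that the radius is an instance-dependent positive multiple of $E$, say $\Delta=\beta E$ with $\beta\in\Q$, $\beta>0$, part of the input. Then your plan goes through cleanly and you do not need to re-open the upstream reduction at all: given $\smace{B}=[B_c-E,B_c+E]$ with $B_c$ positive definite and entrywise nonnegative, pick a positive rational $t$ with $t\bigl(n\max_{ij}(\ol{B})_{ij}\bigr)<2$ (computable in polynomial time and of polynomial size), and map to $A_c=I-tB_c$, $\Delta=tE$. Every $A\in\smace{A}$ is $I-tB$ with $B\in\smace{B}$, the bound on $t$ forces $\lmax(tB)<2$, hence $I+A=2I-tB\succ0$ automatically and $A$ is Schur stable iff $B\succ0$; moreover $A_c$ is then genuinely Schur stable with nonpositive off-diagonal entries exactly when $\smace{B}$ is a YES-instance's midpoint, i.e.\ always, since $B_c\succ0$ and $t\lmax(B_c)<2$. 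So your proof strategy is sound once the class is stated with radius $\beta E$ rather than $E$; as written against the literal statement, it cannot be completed.
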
 

\noindent \textbf{Summary.}
\begin{center}\renewcommand\arraystretch{1.3} 
\tabcolsep=1em
\begin{tabular}{@{}ll@{}}
\svhline
 \emph{Problem} & \emph{Complexity} \\
\hline
Is $\imace{A}$ Hurwitz stable? & coNP-hard \\
Is $\smace{A}$ Hurwitz stable? & coNP-hard \\
Is $\imace{A}$ Schur stable? & ? \\
Is $\smace{A}$ Schur stable? & coNP-hard \\
\svhline
\end{tabular}
\end{center}

\subsection{Further topics}
\label{further}
Due to the limited space, we had to omit many interesting topics. 
We touched only briefly the complexity issues of interval linear inequalities, but there are more results; see, e.g., \cite{fiedler:linopt,Hla2015a}.
We did not discussed complexity of computing the range of polynomials over intervals \cite{kreinovich:complexity}, too.
In short, we mention two particular problems:
\begin{itemize}
\item
\emph{Matrix power.} Computing the exact bounds on second power of the matrix $\imace{A}^2$ is strongly polynomial (just by evaluating by interval arithmetic), but computing the cube $\imace{A}^3$ turns out to be NP-hard \cite{KoshKre2005}.  
\item
\emph{Matrix norm.} Computing the range of $\|A\|$ when $A\in\imace{A}$ is a trivial task for vector $\ell_p$-norms applied on matrices (including Frobenius norm or maximum norm) or for induced $1$- and $\infty$-norms. On the other hand, determining the largest value of the spectral norm $\|A\|_2$ (the largest singular value) subject to $A\in\imace{A}$ is NP-hard \cite{Nem1993}.
\end{itemize}

\section{Summary}
In this work we explored the fundamental problems of interval linear algebra. Our goal was to:
\begin{itemize}
\item provide a basic introduction to interval linear algebra
\item answer elementary computational complexity questions linked with interval linear algebra 
\item discuss the computational complexity of the basic problems
\item explain the relations between these problems  
\item mention relaxations or special classes of these problems that are easily decidable or there exist polynomial algorithms solving them
\item provide a basis for further reading and research 
\end{itemize}

At this place we also would like to apologize to those whose results are not mentioned in this work. There are many great achievements, however this work can unfortunately consume only limited amount of space. We provide links to the literature, where you can find much more of them.

\subsection*{Acknowledgement.}
J.~Hor\'a\v cek and M.\ Hlad\'{\i}k were supported by GA\v{C}R grant P402/13-10660S. 
M.~\v{C}ern\'y was supported by the GA\v{C}R grant 16-00408S.

\end{document}